\def\QED{\mbox{\rule[0pt]{1.5ex}{1.5ex}}}
\newcommand{\define}{\stackrel{\triangle}{=}}
\newtheorem{corollary*}{\bf Corollary}
\newtheorem{theorem}{\bf Theorem}
\newtheorem{corollary}{\bf Corollary}
\newtheorem{lemma}{\bf Lemma}
\newcommand{\xH}{\mathbf{H}}
\newcommand{\xX}{\mathbf{X}}
\newcommand{\xY}{\mathbf{Y}}
\newcommand{\xZ}{\mathbf{Z}}
\newcommand{\xT}{\mathbf{T}}
\newcommand{\xv}{\mathbf{v}}
\newcommand{\xu}{\mathbf{u}}
\newcommand{\xw}{\mathbf{w}}
\newcommand{\xV}{\mathbf{V}}
\newcommand{\barxv}{\mathbf{\bar{v}}}
\newcommand{\barxH}{\mathbf{\bar{H}}}
\newcommand{\barxX}{\mathbf{\bar{X}}}
\newcommand{\barxY}{\mathbf{\bar{Y}}}
\newcommand{\barxZ}{\mathbf{\bar{Z}}}
\newcommand{\barxu}{\mathbf{\bar{u}}}
\begin{document}
\setcounter{page}{1}
\title{Degrees of Freedom of Wireless $X$ Networks}
\author{\authorblockN{Viveck R. Cadambe, Syed A. Jafar\\}
\authorblockA{Center for Pervasive Communications and Computing\\Electrical Engineering and Computer Science\\
University of California Irvine, \\
Irvine, California, 92697, USA\\
Email: {vcadambe@uci.edu, syed@uci.edu}\\ \vspace{-1cm}}}

\maketitle
\vspace{10pt}
\begin{abstract} 
We explore the degrees of freedom of  $M\times N$ user wireless $X$ networks, i.e. networks of $M$  transmitters and $N$ receivers where every transmitter has an independent message for every receiver. We derive a general outerbound on the degrees of freedom \emph{region} of these networks. When all nodes have a single antenna and all channel coefficients vary in time or frequency, we show that the \emph{total} number of degrees of freedom of the $X$ network is equal to $\frac{MN}{M+N-1}$ per orthogonal time and frequency dimension. Achievability is proved by constructing interference alignment schemes for $X$ networks that can come arbitrarily close to the outerbound on degrees of freedom. For the case where either $M=2$ or $N=2$ we find that the degrees of freedom characterization also provides a capacity approximation that is accurate to within $O(1)$. For these cases the outerbound is exactly achievable. While $X$ networks have significant degrees of freedom benefits over interference networks when the number of users is small, our results show that as the number of users increases, this advantage disappears. Thus, for large $K$, the $K\times K$ user wireless $X$ network loses half the degrees of freedom relative to the  $K\times K$ MIMO outerbound achievable through full cooperation. Interestingly, when there are few transmitters sending to many receivers ($N\gg M$) or many transmitters sending to few receivers ($M\gg N$), $X$ networks are able to approach the $\min(M,N)$ degrees of freedom possible with full cooperation on the $M\times N$ MIMO channel. Similar to the interference channel, we also construct an example of a $2$ user $X$ channel with propagation delays  where  the outerbound on degrees of freedom is achieved through interference alignment based on a simple TDMA strategy. 
\end{abstract}
\newpage
\section{Introduction}

There is increasing interest in approximate and/or asymptotic capacity characterizations of wireless networks as a means to understanding their performance limits. Two complementary approaches have been particularly successful in producing parsimonious characterizations of capacity along meaningful asymptotes. The first approach, which we refer to as the Gupta-Kumar approach, identifies the scaling laws of the capacity of wireless networks as the number of nodes approaches infinity. Starting with Gupta and Kumar's seminal work \cite{Gupta_Kumar}, this line of research has reached a level of maturity where the network scaling laws are known from both a mathematical \cite{Ozgur_Leveque_Tse} and physical perspective \cite{Franceschetti_Migliore_Minero}. Interesting ideas to emerge out of the Gupta-Kumar approach include the impact of mobility \cite{Grossglauser_Tse,Diggavi_Grossglauser_Tse}, hierarchical cooperation \cite{Ozgur_Leveque_Tse} and physical propagation models \cite{Franceschetti_Migliore_Minero}. The second approach is the degrees-of-freedom approach which considers a network with a \emph{fixed} number of nodes and identifies the capacity scaling laws as the signal to noise ratio is increased. The capacity scaling laws obtained through this approach are equivalently described by various researchers as  the multiplexing gain, the pre-log term or the degrees of freedom characterization. Starting with the point to point MIMO channel \cite{Foschini_Gans, Telatar}, the degrees of freedom have been characterized for MIMO multiple access \cite{Tse_Viswanath_Zheng}, MIMO broadcast channel  \cite{Yu_Cioffi, DimacsViswanath,Vishwanath_Jindal_Goldsmith}, 2 user MIMO interference channel  \cite{Jafar_Fakhereddin},  various distributed relay networks \cite{Boelcskei_Nabar_Oyman_Paulraj,Morgenshtern_Boelcskei_Nabar,Borade_Zheng_Gallager}, 2 user MIMO $X$ channel \cite{MMK, arxiv_dofx2, MMKreport1, MMKreport2, Jafar_Shamai}, and most recently the $K$ user interference channel  \cite{Cadambe_Jafar_int}. The impact of channel uncertainty on the degrees of freedom  has been explored in \cite{Jafar_mobile,Lapidoth_Shamai_collapse,lapidoth_dof, lapidoth, Jindal}. The degrees of freedom perspective has been used to characterize  the capacity benefits of cooperation between transmitters and/or between receivers through noisy channels in \cite{MadsenIT, Nosratinia_Madsen}.  The degrees of freedom benefits of cognitive message sharing are explored in \cite{Jafar_Shamai,Cadambe_Jafar_int,Lapidoth_Shamai_Wigger_IN,Devroye_Sharif}. In this work we explore  the degrees of freedom of $X$ networks.

\subsection*{$X$ Network}
\label{sec:sysmod}
\begin{figure}[h]
\begin{center}\setlength{\unitlength}{0.00045833in}
\begingroup\makeatletter\ifx\SetFigFont\undefined%
\gdef\SetFigFont#1#2#3#4#5{%
  \reset@font\fontsize{#1}{#2pt}%
  \fontfamily{#3}\fontseries{#4}\fontshape{#5}%
  \selectfont}%
\fi\endgroup%
{\renewcommand{\dashlinestretch}{30}
\begin{picture}(6924,4089)(0,-10)
\put(312,3387){\makebox(0,0)[lb]{{\SetFigFont{7}{8.4}{\familydefault}{\mddefault}{\updefault}$W^{[21]}$}}}
\put(312,3012){\makebox(0,0)[lb]{{\SetFigFont{7}{8.4}{\familydefault}{\mddefault}{\updefault}$W^{[31]}$}}}
\put(312,3762){\makebox(0,0)[lb]{{\SetFigFont{7}{8.4}{\familydefault}{\mddefault}{\updefault}$W^{[11]}$}}}
\put(237,1437){\makebox(0,0)[lb]{{\SetFigFont{7}{8.4}{\familydefault}{\mddefault}{\updefault}$W^{[32]}$}}}
\put(237,1812){\makebox(0,0)[lb]{{\SetFigFont{7}{8.4}{\familydefault}{\mddefault}{\updefault}$W^{[22]}$}}}
\put(237,2187){\makebox(0,0)[lb]{{\SetFigFont{7}{8.4}{\familydefault}{\mddefault}{\updefault}$W^{[12]}$}}}
\path(2262,3462)(4962,3462)
\path(2262,1962)(4962,1962)
\path(2262,1962)(4962,462)
\path(2262,1962)(4962,3462)
\path(2262,3462)(4962,1962)
\path(2262,3462)(4962,462)
\path(12,2487)(1137,2487)(1137,1362)
	(12,1362)(12,2487)
\path(1287,1962)(1662,1962)
\path(1542.000,1932.000)(1662.000,1962.000)(1542.000,1992.000)
\path(1212,3462)(1587,3462)
\path(1467.000,3432.000)(1587.000,3462.000)(1467.000,3492.000)
\path(87,4062)(1212,4062)(1212,2937)
	(87,2937)(87,4062)
\path(5562,1962)(5937,1962)
\path(5817.000,1932.000)(5937.000,1962.000)(5817.000,1992.000)
\path(5562,3462)(5937,3462)
\path(5817.000,3432.000)(5937.000,3462.000)(5817.000,3492.000)
\path(5487,462)(5862,462)
\path(5742.000,432.000)(5862.000,462.000)(5742.000,492.000)
\path(5862,3762)(6837,3762)(6837,2937)
	(5862,2937)(5862,3762)
\path(5937,2337)(6912,2337)(6912,1512)
	(5937,1512)(5937,2337)
\path(5862,837)(6837,837)(6837,12)
	(5862,12)(5862,837)
\put(5037,3387){\makebox(0,0)[lb]{{\SetFigFont{7}{8.4}{\rmdefault}{\mddefault}{\updefault}$Y^{[1]}$}}}
\put(5037,1887){\makebox(0,0)[lb]{{\SetFigFont{7}{8.4}{\rmdefault}{\mddefault}{\updefault}$Y^{[2]}$}}}
\put(5037,387){\makebox(0,0)[lb]{{\SetFigFont{7}{8.4}{\rmdefault}{\mddefault}{\updefault}$Y^{[3]}$}}}
\put(1662,3387){\makebox(0,0)[lb]{{\SetFigFont{7}{8.4}{\rmdefault}{\mddefault}{\updefault}$X^{[1]}$}}}
\put(1662,1887){\makebox(0,0)[lb]{{\SetFigFont{7}{8.4}{\rmdefault}{\mddefault}{\updefault}$X^{[2]}$}}}
\put(6012,3537){\makebox(0,0)[lb]{{\SetFigFont{7}{8.4}{\familydefault}{\mddefault}{\updefault}$\widehat{W}^{[11]}$}}}
\put(6012,3162){\makebox(0,0)[lb]{{\SetFigFont{7}{8.4}{\familydefault}{\mddefault}{\updefault}$\widehat{W}^{[12]}$}}}
\put(6087,2037){\makebox(0,0)[lb]{{\SetFigFont{7}{8.4}{\familydefault}{\mddefault}{\updefault}$\widehat{W}^{[21]}$}}}
\put(6087,1737){\makebox(0,0)[lb]{{\SetFigFont{7}{8.4}{\familydefault}{\mddefault}{\updefault}$\widehat{W}^{[22]}$}}}
\put(6087,237){\makebox(0,0)[lb]{{\SetFigFont{7}{8.4}{\familydefault}{\mddefault}{\updefault}$\widehat{W}^{[31]}$}}}
\put(6087,537){\makebox(0,0)[lb]{{\SetFigFont{7}{8.4}{\familydefault}{\mddefault}{\updefault}$\widehat{W}^{[32]}$}}}
\end{picture}
}\end{center}
\caption{A $2\times3$ user $X$ network}
\label{figure:X2tx3rx}
\end{figure}
An $M\times N$ user $X$ network is a single-hop communication network with $M$ transmitters and $N$ receivers where each transmitter has an independent message for each receiver and all communication is one way - from the transmitters to the receivers. Thus, transmitters cannot receive and receivers cannot transmit which precludes relaying, feedback and cooperation between transmitters or cooperation between receivers. A $2\times 3$ user $X$ network is shown in Figure \ref{figure:X2tx3rx}. We refer to an $X$ network as a \textit{$K$ user $X$ network} if $M=N=K$. 
The $M \times N$ user $X$ network is described by input-output relations
$$Y^{[j]}(\kappa) = \displaystyle\sum_{i \in \{1,2 \ldots M\}} H^{[ji]}(\kappa) X^{[i]}(\kappa) + Z^{[j]}(\kappa),~~~ j=1,2 \cdots N$$
where $\kappa$ represents the channel use index. For simplicity we will assume $\kappa$ represents the time index. It should be noted that it can equivalently  be interpreted as the frequency index if coding occurs over frequency slots. $X^{[i]}(\kappa)$ is the signal transmitted by transmitter $i$ , $Y^{[j]}(\kappa)$ is the signal received by receiver $j$ and $Z^{[j]}(\kappa)$ represents the additive white Gaussian noise at receiver $j$. The noise variance at all receivers is assumed to be equal to unity. $H^{[ji]}(\kappa)$ represents the channel gain between transmitter $i$ and receiver $j$ at time $\kappa$. The channel coefficients are time varying and all nodes are assumed to have causal knowledge of all the channel gains. Also, to avoid degenerate channel conditions (identical/zero/infinite channel gains), we assume that all channel fade coefficients are drawn from a continuous distribution whose support lies between a non-zero minimum value and a finite maximum value. 

We assume that transmitter $i$ has message $W^{[ji]}$ for receiver $j$, for each $i \in \{1,2 \ldots M\}, j \in \{1,2 \ldots N\}$, resulting in a total of $M N$ mutually independent messages.  The total power across all transmitters is assumed to be $\rho$ per channel use. We denote the size of the message set by $|W^{[ji]}(\rho)|$. Let $ R_{ji}(\rho) = \frac{| W^{[ji]}(\rho)|}{\kappa_0} $ denote the rate of the codeword encoding the message $W^{[ji]}$, where the codeword spans $\kappa_0$ slots. A rate-matrix$[(R_{ji}(\rho))]$ is said to be \emph{achievable} if messages $W^{[ji]}$ can be encoded at rates $R_{ji}(\rho)$ so that the probability of error can be made arbitrarily small simultaneously for all messages by choosing appropriately long $\kappa_0$.
Let $C(\rho)$ represent capacity region of the $X$ network i.e it represents the set of all achievable rate-matrices $[(R_{ji}(\rho))]$. The degrees of freedom region of the $M\times N$ user $X$ network is defined by
\begin{eqnarray*}
\mathcal{D} = \Bigg\{ [(d_{ji})] \in \mathbb{R}_{+}^{MN}: \forall [(w_{ji})] \in \mathbb{R}_+^{MN} & &\\
		\displaystyle\sum_{i \in \{1,2 \ldots ,M\}, j \in \{1,2 \ldots ,N\}} w_{ji} d_{ji} & \leq & \lim\sup_{\rho \to \infty} \left[ \sup_{[(R_{ji}(\rho)) \in C(\rho)} \displaystyle\sum_{i \in \{1,2 \ldots ,M\}, j \in \{1,2 \ldots ,N\}} [ w_{ji} R_{ji}(\rho) ]\frac{1}{\log(\rho)} \right] \Bigg\} 
\end{eqnarray*}
$X$ networks are interesting because they encompass all communication scenarios possible in a one-way single hop wireless network. For example, multiple access, broadcast and interference networks are special cases of $X$ networks. Since there are messages from every transmitter to every receiver, every transmitter is associated with a broadcast channel, every receiver is associated with a multiple access channel and every disjoint pairing of transmitters and receivers comprises an interference channel within the $X$ network. In particular, any outerbound on the degrees of freedom of an $X$ network is also an outerbound on the degrees of freedom of all its subnetworks. This leads us to our first objective.

\noindent{\bf  Objective 1:} \emph{Find an outerbound on the degrees of freedom region of an $X$ network with $M$ transmitters and $N$ receivers.}

The total number of degrees of freedom $d$ \cite{Zheng_Tse} of a network is defined as the ratio of the sum capacity $C(\rho)$ to the logarithm of the SNR (signal-to-noise ratio) $\rho$ as the SNR grows large. 
\begin{eqnarray}
d=\lim_{\rho\rightarrow\infty}\frac{C(\rho)}{\log(\rho)}.
\end{eqnarray}
The degrees of freedom of a network provide a capacity approximation
\begin{eqnarray}
C(\rho) = d\log(\rho) + o(\log(\rho))
\end{eqnarray}
where $f(x)=o(g(x))$ means that $\lim_{x\rightarrow\infty}\frac{f(x)}{g(x)}=0$. The accuracy of this approximation approaches $100\%$ as the SNR approaches infinity. Note that the high SNR regime does not imply a high signal-to-interference-and-noise (SINR) ratio because both the signal and the interference powers are scaled. In fact, one can view the high SNR regime as simply de-emphasizing the local thermal noise relative to the signal (and interference) power. By de-emphasizing thermal noise, the degrees of freedom perspective directly addresses the issue of interference. One of the most interesting ideas to emerge out of the degrees of freedom perspective is the principle of \emph{interference alignment}. 

Interference alignment is best understood through a simple example presented in \cite{Cadambe_Jafar_int}. It is shown in \cite{Cadambe_Jafar_int} that in a wireless interference network where all transmissions interfere with each other, \emph{it is possible for everyone to communicate half the time with \emph{no interference} to each other}. If no interference is allowed, it is clear that the signals must be orthogonalized. The question, therefore, is whether it is possible to have orthogonal time division such that everyone communicates half the time. The problem may seem analogous to the task of dividing a cake among $K>2$ people so that everyone gets half of it - and hence quite impossible. However, the following example from \cite{Cadambe_Jafar_int} illustrates a simple scenario where everyone can commnicate half the time with no interference: Consider a wireless interference channel with $K$ transmitters, $K$ receivers and an independent message from each transmitter to its corresponding receiver. Suppose in this channel all desired links have a propagation delay equal to an even integer multiple of the basic symbol duration and all undesired cross links that carry interference have a propagation delay equal to an odd multiple of the basic symbol duration. In this channel, suppose all transmitters send simultaneously over the even time slots and are quiet during odd time slots. Because of the choice of propagation delays, the interference at each receiver aligns itself entirely over odd time slots so that the desired signals are received interference-free over even time slots. It is important to understand why the analogy with the cake cutting does not work. The reason is that in the interference channel, because of the propagation delays the time division seen by each receiver is different. The same signals that are orthogonal at one receiver may be perfectly aligned at another receiver. In the example above, the desired signal for each receiver is orthogonal to the interference as seen by that receiver. 

The propagation delay example described above is important for several reasons. First of all, it serves as a toy example to illustrate the key concept of interference alignment and the fallacy of the "cake-cutting" interpretation of spectral allocation. Second, it shows the significance of signal propagation delay on the capacity of wireless networks. Third, a direct relationship has been established between the delay propagation model and the deterministic channel model for wireless networks proposed in \cite{Avestimehr_Diggavi_Tse}. The deterministic channel model is a promising approach to understand the capacity of wireless networks and has been shown to lead to capacity approximations within a constant number of bits of the actual channel capacity in many interesting cases. Interference alignment on the deterministic channel is an intriguing problem. The propagation delay based interference alignment examples can be translated directly into the deterministic channel model. Then from the deterministic channel model these examples can be translated into interference alignment examples for real Gaussian networks with constant channel coefficients. Finally, the propagation delay example is also interesting also from a practical perspective. While the propagation delay  based interference alignment example is straightforward on the interference channel, no such example is known even for the $2$ user $X$ channel. As our second objective we address this problem.

\noindent{\bf Objective 2:} \emph{Construct an example of an $X$ network with propagation delays where the upperbound on the degrees of freedom is achieved through interference alignment based on time division.}

The idea of interference alignment emerged out of the work on the degrees of freedom of the $2$ user MIMO $X$ channel in \cite{MMK, arxiv_dofx2, MMKreport1, MMKreport2, Jafar_Shamai} and the compound broadcast channel in \cite{Weingarten_Shamai_Kramer}. A natural coding scheme for the $2\times 2$ user MIMO $X$ channel was proposed by Maddah-Ali, Motahari and Khandani in \cite{MMK, MMKreport1}. Viewing the $X$ channel as comprised of broadcast and multiple access components, the MMK scheme combines broadcast and multiple access coding schemes, namely dirty paper coding and successive decoding into an elegant coding scheme for the $X$ channel. A surprising result of \cite{MMK} is that $4$ degrees of freedom are achieved on the $2$ user $X$ channel with only $3$ antennas at each node. The key to this result is the implicit interference alignment that is achieved through iterative optimization of transmit precoding and receive combining matrices in the MMK scheme \cite{MMK}.  This observation lead to the first explicit interference alignment scheme, proposed in  \cite{arxiv_dofx2} which also shows that dirty paper coding or successive decoding (which may be advantageous at low SNR) are not required for degrees of freedom and that zero forcing suffices to achieve the maximum degrees of freedom possible on the $2\times 2$ user $X$ network. The zero forcing based interference alignment scheme of \cite{arxiv_dofx2} is used in \cite{MMKreport2} to characterize an achievable degree of freedom region for the $2\times 2$ user MIMO $X$ channel. The achievable degrees of freedom region is enlarged to include non-integer values in \cite{Jafar_Shamai}. \cite{Jafar_Shamai} also provides a converse to establish the optimality of the explicit interference alignment scheme in many cases.

Since its inception as a specific scheme for the MIMO $X$ channel, interference alignment has quickly found applications in a variety of scenarios, such as the compound broadcast channel \cite{Weingarten_Shamai_Kramer}, cognitive radio networks \cite{Jafar_Shamai, Devroye_Sharif}, deterministic channel models \cite{Jafar_Shamai_Cadambe} and most recently the interference channel with $K>2$ users \cite{Cadambe_Jafar_int}. The interference alignment scheme proposed in \cite{arxiv_dofx2} for the $2$ user MIMO $X$ channel  uses a simple spatial alignment of beamforming vectors. The compound broadcast channel \cite{Weingarten_Shamai_Kramer} and the MIMO $X$ channel with $M>1$ antennas at each node (where $M$ is not a multiple of $3$) \cite{Jafar_Shamai} introduce the time axis into the picture. The interference alignment scheme in these cases uses multiple symbol extension of the channel so that interference alignment is achieved through joint beamforming in both space and time. These  are also the first examples of wireless networks where non-integer degrees of freedom are shown to be optimal.  The interference alignment scheme in \cite{Jafar_Shamai} for the $2$ user $X$ with a single antenna at each node  also relies on channel variations in either time or frequency. The use of time/frequency variations is intriguing as it is not known whether these variations are essential to achieve the full degrees of freedom. The interference alignment scheme proposed in \cite{Cadambe_Jafar_int} for the interference channel with $K>2$ users and single antennas at all nodes requires a further generalization. In order to achieve the outerbound on the degrees of freedom, every user must achieve half his capacity at high SNR. With random channel coefficients the problem is over-constrained and even time variations and multiple symbol extensions are not found to be sufficient to achieve perfect interference alignment. Perfect interference alignment requires that each receiver should set aside exactly half of the observed signal space as a "waste basket". From each receiver's point of view, interfering transmissions are allowed as long as they are restricted to that recevier's designated interference (waste basket) signal space. Since each transmitter must use half the signal space dimensions in order to achieve half his capacity in the absence of interference at high SNR, the alignment problem requires a "tight" fit of all interfering signal spaces into the waste basket at each receiver, when each signal space has the same size as the waste basket itself. Further, this tight fit must be accomplished at every receiver. As it turns out this problem is over-constrained. The key to the solution proposed in \cite{Cadambe_Jafar_int} is to allow a few extra dimensions so that the problem becomes less constrained. The extra dimensions act as an overflow space for interference terms that do not align perfectly. It is shown in \cite{Cadambe_Jafar_int} that as the total number of dimensions grows large the size of the overflow space becomes a negligible fraction. Thus, for any $\epsilon>0$ it is possible to partially align interference to the extent that the total number of degrees of freedom achieved is within $\epsilon$ of the outerbound. The tradeoff is that the smaller the value of $\epsilon$ the larger the total number of dimensions (time/frequency slots) needed to recover a fraction $1-\epsilon$ of the outerbound value per time/frequency slot. Thus, we have the generalization of the interference alignment scheme to "partial" alignment with time varying channels and multiple symbol extensions. Interference alignment schemes have also been constructed with propagation delays (as in the example discussed earlier) in time dimension, Doppler shifts in frequency dimension, channel phase shifts in complex (real and imaginary) signal space, and in the codeword space with algebraic lattice codes \cite{Jafar_Shamai_Cadambe, Bresler_Parekh_Tse}.  Given the wide variety of generalizations, interference alignment is now best described as a "principle"   rather than a specific scheme. Thus, interference alignment refers to the general principle that signals may be chosen so that they cast overlapping shadows at the receivers where they constitute interference while they continue to be distinguishable at the receivers where they are desired. The key to interference alignment is that each receiver has a different view of the signal space, which makes it possible to have the signals align at one receiver and remain distinguishable at another receiver.

While the principle of interference alignment is quite simple, the extent to which interference can be aligned in a network is difficult to determine. Ideally one would like all interfering signals to align at every receiver and all desired signals to be distinguishable. On the $K$ user interference channel the transmitted  signal space of each user must align with the interference space of all other users' at their receivers but not at his own receiver. It is in fact quite surprising that enough interference alignment is possible that one can approach arbitrarily close to the outerbound ($K/2$) on the degrees of freedom. As we introduce more messages into the network, the interference alignment problem becomes even more challenging. The most challenging case for interference alignment is therefore the $X$ network where every transmitter has an independent message for every receiver. In this paper we explore this extreme scenario to find out the limits of interference alignment. This brings us to our next objective.

\noindent{\bf Objective 3:} \emph{Construct interference alignment schemes for the $X$ network with $M$ transmitters and $N$ receivers where the channel coefficients are random and time/frequency varying and each node is equipped with only one antenna.}

There is an important distinction between perfect interference alignment schemes and partial interference alignment schemes. Perfect interference alignment schemes are able to exactly achieve the degrees of freedom outerbound with a finite symbol extension of the channel. However, partial interference alignment schemes pay a penalty in the form of the overflow room required to "almost" align interference. While partial interference alignment schemes provide an innerbound that is within $o(\log(\rho))$ of the outerbound, only perfect interference alignment schemes are able to provide an innerbound that is within $O(1)$ of the capacity outerbound \cite{Cadambe_Jafar_int}. Since the degrees of freedom characterization  requires only an $o(\log(\rho))$ approximation to capacity both partial and perfect interference alignment schemes can provide tight innerbounds for degrees of freedom. However,  an $O(1)$ approximation is in general more accurate than an $o(\log(\rho))$ approximation. Therefore, it is interesting to identify the cases where the degrees of freedom characterization leads to an $O(1)$ characterization as well, as
\begin{eqnarray}
C(\rho) = d\log(\rho) + O(1).
\end{eqnarray}
$f(x)=O(g(x))$ means that $\lim_{x\rightarrow\infty}\frac{f(x)}{g(x)}<\infty$. As an example, consider the corresponding results for the $K>2$ user interference channel in \cite{Cadambe_Jafar_int}. For this channel, while the degrees of freedom are characterized in all scenarios where the channel coefficients are time varying, an $O(1)$ capacity characterization is found only for the $K = 3$ user case when all nodes have $M>1$ antennas. This is because a perfect interference alignment scheme was only found for the $K=3$ user case with $M>1$ antennas at all nodes. For the more general setting of the $X$ network we pursue a related objective.

\noindent{\bf Objective 4:} \emph{Identify $X$ network scenarios where degrees of freedom characterizations also provide O(1) capacity characterizations. In other words, identify $X$ network scenarios where perfect interference alignment is possible.}

The result that the $K$ user interference channel has $K/2$ degrees of freedom is interesting only for $K>2$ users. For $K=2$ users it is rather trivial as $K/2=1$ degree of freedom is achieved by simple TDMA between the two users. However, for the $X$ channel even the $2$ user case is interesting. Recall that the time varying $2$ user $X$ channel  with single antenna nodes has $4/3$ degrees of freedom \cite{MMK, Jafar_Shamai}. In other words, each of the $4$ messages on the $2$ user $X$ channel is able to access $1/3$ degrees of freedom. The observation that the $2$ user $X$ channel has a significant degrees of freedom advantage over the $2$ user interference channel leads us to the next objective:

\noindent{\bf Objective 5:} Determine the degrees of freedom advantage of the $K$ user $X$ channel over the $K$ user interference channel for $K>2$. 

A fundamental question for wireless networks is the capacity penalty of distributed signal processing. From the results of \cite{Cadambe_Jafar_int} we know that interference networks lose half the degrees of freedom compared to the MIMO outerbound corresponding to full cooperation. Thus, the loss of half the degrees of freedom is the cost of distributed processing in interference networks. Answering this question in the more general setting of $X$ networks is the last objective that we pursue in this paper.

\noindent{\bf Objective 6:} \emph{Characterize the cost, in terms of degrees of freedom, of distributed processing for $X$ networks with $M$ transmitters and $N$ receivers.}

Next we summarize the progress we make in this paper toward achieving these objectives.

\subsection{Overview of Results}

\subsubsection{Outerbound} The first result of this paper, presented in Section \ref{sec:outerbound},  is an outerbound for the degrees of freedom \emph{region} of the $M \times N$ user $X$ network. In particular, the \emph{total} number of degrees of freedom of the $M \times N$ user $X$ network is shown to be upper-bounded by $\frac{AMN}{M+N-1}$ per orthogonal time and frequency dimension, when each node is equipped with $A$ antennas. The outerbound is quite general as it applies to any fully connected (i.e. all channel coefficients are non-zero) $M\times N$ user $X$ network, regardless of whether the channel coefficients are constant or time varying. The key to the outerbound is to distribute the $MN$ messages in the $X$ network into $MN$ (partially overlapping) sets, each having $M+N-1$ elements. By picking these sets in a certain manner we are able to derive a MAC (multiple access channel) outerbound similar to \cite{Jafar_Fakhereddin} for the sum rate of the messages in each set. Since the MAC receiver has only $A$ antennas, the MAC has at most $A$ degrees of freedom. Thus, each set of messages can at most have $A$ degrees of freedom. The outerbounds for these sets together define an outerbound on the degrees of freedom region of the $M\times N$ user $X$ network and adding all the outerbounds gives us the bound on the total number of degrees of freedom. 
\subsubsection{Propagation Delay Example} We construct a TDMA based  interference alignment scheme that achieves the degrees of freedom outerbound for the $2\times 2$ user $X$ network with carefully chosen propagation delays for each link. As shown in \cite{Jafar_Shamai_Cadambe} for interference networks, this example can be easily applied to a deterministic channel model as well as to a Gaussian  $X$ channel model with single antennas at all nodes, no propagation delays and constant channel coefficients. This is the first known example of a Gaussian $X$ network with constant channel coefficients and single antenna nodes where the outerbound on degrees of freedom is achieved.
\subsubsection{Partial Interference Alignment Scheme} In Section \ref{sec:partialalign} we present a partial interference alignment scheme for $M\times N$ user $X$ networks with time varying channel coefficients. By considering larger supersymbols the partial interference alignment scheme is able to approach within any $\epsilon>0$ of the degrees of freedom outerbound. Combined with the outerbound, the partial interference alignment scheme establishes that the total number of degrees of freedom of $M\times N$ user $X$ networks with single antenna nodes and time (or frequency) varying channel coefficients is precisely $\frac{MN}{M+N-1}$. The partial interference alignment scheme does not extend completely to $X$ networks where each node has multiple antennas. However, if we imagine each antenna to be a separate user (which can only reduce the capacity) then a simple application of the partial interference alignment scheme shows that an innerbound of $\frac{AMN}{M+N-1/A}$ is achievable for $M\times N$ user $X$ networks where each node has $A$ antennas. If either $M$ or $N$ is reasonably large, then this innerbound is close to the outerbound.
\subsubsection{Perfect Interference Alignment Scheme} We construct a perfect interference alignment schemes for the $M\times N$ user $X$  channel when the number of receivers $N=2$. This scheme achieves exactly one degree of freedom for every message over an $M+N-1$ symbol extension of the channel, thus achieving exactly the outerbound of $\frac{MN}{M+N-1}$ total degrees of freedom over a finite channel extension. We also show an interesting \emph{reciprocity property} of beamforming and zero-forcing based schemes in wireless networks. In particular, we show that given a coding scheme in the $X$ network based entirely on beamforming and zero-forcing, we can construct a beamforming and zero-forcing based coding scheme over the reciprocal $X$ network achieving the same number of degrees of freedom as the original scheme. The coding scheme over the reciprocal channel may need \textit{apriori} knowledge of all channel gains even when the original scheme needs only causal channel knowledge. The reciprocal scheme is therefore practical in a scenario where channel extensions are considered in the frequency domain. This reciprocity property serves as an achievability proof for the $M \times N$ user $X$ channel when $M=2$, for any $N$. Thus, for either $M=2$ or $N=2$ we are able to construct perfect interference alignemnt schemes and in both these cases we have an $O(1)$ capacity characterization.
\subsubsection{$X$ networks versus Interference Networks} Since we are able to characterize the exact degrees of freedom of $X$ networks and the degrees of freedom of interference networks are already known, the comparison follows simply as a corollary. The $K\times K$ user $X$ channel has significant degrees of freedom advantage over the $K$ user interference channel when $K$ is small. However the advantage disappears as $K$ increases. This is easily seen by substituting $M=N=K$ in the total degrees of freedom expression for the $X$ channel to obtain $\frac{K^2}{2K-1}$ which is close to $K/2$ for large $K$. 
\subsubsection{Cost of Distributed Processing} This result also follows as a corollary of the main result that establishes the degrees of freedom for $X$ networks. Compared to $M\times N$ MIMO which represents joint signal processing at all transmitters and all receivers, the $M\times N$ user $X$ channel pays a degrees of freedom penalty of $\min(M,N) - \frac{MN}{M+N-1}$, which is the cost of distributed processing on the $X$ channel. While the cost of distributed processing is equal to half the degrees of freedom on the interference channel, it is interesting to note that for $X$ networks, this penalty disappears when the number of transmitters is much larger than the number of receivers or vice versa. This is easily seen because, when $M\gg N$ or $N\gg M$, then $\frac{MN}{M+N-1}$ is very close to $\min(M,N)$. In other words, a small set of distributed nodes in a wireless communication network with no shared messages can serve as a multi-antenna node, if they are transmitting to, or receiving from a large number of distributed nodes. We also provide an application of this result - the two-hop parallel relay network with $M$ distributed transmitting and receiving nodes with large number of relays. In \cite{Boelcskei_Nabar_Oyman_Paulraj}, this parallel relay network is shown to have $M/2$ degrees of freedom if the number of relays was large. By treating the network as a compound of a $M \times K$ and a $ K \times M$ $X$ channel, we construct an alternate degrees-of-freedom-optimal achievable scheme in section \ref{sec:parallelrelay}. 
   
\vspace{5pt}
\section{Degrees of Freedom Region Outerbound for $X$ Networks}
\label{sec:outerbound}
While our main focus in this paper is on the case where each nodes has a single antenna, we present the outerbound for the more general setting where transmitter $i$ has $A^t_i$ antennas and receiver $j$ has $A^r_j$ antennas, $\forall i\in\{1,2,\cdots, M\}, j\in\{1,2,\cdots,N\}$.
\begin{theorem}
\label{thm:outerbound}
Let 
\begin{eqnarray*} \mathcal{D}^{out} \define \bigg\{ [(d_{ji})] : \forall (m,n) \in \{1,2 \ldots M\} \times \{1,2 \ldots N\}\\
 \displaystyle\sum_{q=1}^{N} d_{qm} + \displaystyle\sum_{p=1}^{M} d_{np} - d_{nm} & \leq \max(A^t_m,A^r_n) & \bigg\} \end{eqnarray*}
Then $\mathcal{D} \subseteq \mathcal{D}^{out}$ where $\mathcal{D}$ represents the degrees of freedom region of the $M \times N$ $X$ channel
\end{theorem}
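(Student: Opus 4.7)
The goal is to prove, for each pair $(m,n) \in \{1,\ldots,M\}\times\{1,\ldots,N\}$, the single-letter sum-rate inequality
$$\sum_{q=1}^N R_{qm}(\rho) + \sum_{p=1}^M R_{np}(\rho) - R_{nm}(\rho) \;\le\; \max(A^t_m,A^r_n)\,\log\rho + o(\log\rho),$$
which, after dividing by $\log\rho$ and letting $\rho\to\infty$, yields the DoF-region inequality defining $\mathcal{D}^{out}$ by the definition of $\mathcal{D}$. The left-hand side is precisely the total rate of the $M+N-1$ distinct messages in
$$S_{mn} = \{W^{[qm]} : q=1,\ldots,N\} \cup \{W^{[np]} : p=1,\ldots,M\},$$
each of which either originates at transmitter $m$ or is destined for receiver $n$; the unique common element $W^{[nm]}$ of the two subsets is what the subtracted $R_{nm}$ compensates for.

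My plan is to bound this sum rate by a two-step genie-aided reduction. First, I would freely provide to every node the $(M-1)(N-1)$ messages $\{W^{[ji]} : j\neq n,\ i\neq m\}$ that lie outside $S_{mn}$; this enhancement can only enlarge the capacity region and leaves the messages in $S_{mn}$ as the only active ones to be communicated through the channel. Second, I would allow full cooperation among the $M-1$ transmitters other than $m$ and among the $N-1$ receivers other than $n$, forming a super-transmitter $\bar m$ with $\sum_{i\neq m}A^t_i$ antennas and a super-receiver $\bar n$ with $\sum_{j\neq n}A^r_j$ antennas. The resulting network is a $2\times 2$ MIMO $X$-like configuration among transmitter $m$, super-transmitter $\bar m$, receiver $n$, and super-receiver $\bar n$, with only three active message directions, since the entire $\bar m \to \bar n$ direction has been absorbed into the side information.

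The crux is then to upper bound the sum rate of the three remaining message groups by $\max(A^t_m,A^r_n)\log\rho + o(\log\rho)$. Mirroring the MAC-style genie argument of Jafar and Fakhereddin for the 2-user MIMO interference channel, I would supply one additional carefully designed piece of side information to a single designated decoder so that the reduced network collapses into a MIMO multiple access channel whose receiver has effective output dimension $\max(A^t_m,A^r_n)$; the standard MIMO MAC outer bound then yields the inequality. When $A^r_n \ge A^t_m$ the side information is chosen so that receiver $n$, using only its own $A^r_n$ antennas, can see clean versions of the signals carrying all of $S_{mn}$; when $A^t_m > A^r_n$ a symmetric argument with the roles of transmit and receive sides swapped places the bottleneck at transmitter $m$'s $A^t_m$ antennas. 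The hard part will be obtaining this tight $\max$ bottleneck rather than the naive $A^t_m + A^r_n$ one gets by independently combining the broadcast bound $\sum_{q} d_{qm}\le A^t_m$ at transmitter $m$ with the multiple-access bound $\sum_{p} d_{np}\le A^r_n$ at receiver $n$: the two bounds must be made to share, rather than stack, by constructing a noise-compensated linear combination of observations that cleans interference without introducing an extra $\min(A^t_m,A^r_n)$ of dimension, and then the $o(\log\rho)$ slack must be controlled through Fano's inequality together with the standard bounds on differential entropies of power-constrained signals.
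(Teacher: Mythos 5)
Your setup is right and matches the paper's: you pick out exactly the set $\mathcal{W}^{nm}$ of $M+N-1$ messages that touch transmitter $m$ or receiver $n$, discard (or genie away) the rest, and aim to bound their sum rate by the capacity of a MAC whose receiver has $\max(A^t_m,A^r_n)$ dimensions. The transmitter/receiver cooperation step you add is a valid (if unnecessary) relaxation. But the load-bearing step --- the one you yourself flag as ``the hard part'' --- is left to a ``carefully designed piece of side information'' and a ``noise-compensated linear combination of observations,'' and neither of these is the mechanism that actually closes the argument. The paper's mechanism is different and more specific: the genie hands the messages $W^{[nq]}$, $q\neq m$ (i.e., the messages bound for receiver $n$ from the \emph{other} transmitters, which are the only things those transmitters are still sending) to every receiver $p\neq n$, so each such receiver can subtract $X^{[i]}$, $i\neq m$, and is left observing only $\hat{Y}^{[p]}=H^{[pm]}X^{[m]}+Z^{[p]}$, from which it decodes $W^{[pm]}$. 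Receiver $n$ needs no side information at all: by Fano it decodes its own $M$ messages, reconstructs and subtracts $X^{[i]}$, $i\neq m$, and is left with $\hat{Y}^{[n]}=H^{[nm]}X^{[m]}+Z^{[n]}$. The crucial structural fact is that all $N-1$ cross messages $W^{[qm]}$, $q\neq n$, emanate from the \emph{single} input $X^{[m]}$ of dimension $A^t_m$; one then performs \emph{channel enhancement} at receiver $n$ --- reduce its noise and, if $A^r_n<A^t_m$, add antennas until it has $\max(A^t_m,A^r_n)$ of them --- so that every $\hat{Y}^{[p]}$, $p\neq n$, is a stochastically degraded version of the enhanced $\hat{Y}^{[n]}$ (this is the Jafar--Fakhereddin step). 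The enhanced receiver $n$ can therefore decode all $M+N-1$ messages, and the MAC bound at that single $\max(A^t_m,A^r_n)$-antenna receiver gives the inequality.

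Two concrete problems with your sketch as written. First, your side information flows to the wrong place: giving information to receiver $n$ (or to ``a single designated decoder'') does not explain how the messages $W^{[qm]}$ intended for the \emph{other} receivers get charged against receiver $n$'s dimension; what is needed is side information at the receivers $p\neq n$ plus a degradedness argument pulling their decodability back to an enhanced receiver $n$. Second, your handling of the case $A^t_m>A^r_n$ by ``a symmetric argument with the roles of transmit and receive sides swapped'' is not justified: converse arguments do not transfer under channel reciprocity for multiuser networks, and no such swap is needed --- the paper simply adds antennas at receiver $n$ up to $A^t_m$, which is a legitimate enhancement and is precisely what makes the degradedness claim hold when transmitter $m$ has more antennas. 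Until the degradedness/enhancement step is made explicit, the bound you obtain ``for free'' is only the stacked $A^t_m+A^r_n$, not the $\max$.
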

\begin{proof}
We start by defining $MN$ sets $\mathcal{W}^{nm},n\in\{1,2,\cdots,N\},m\in\{1,2,\cdots,M\}$ as follows:
\begin{eqnarray}
\mathcal{W}^{nm}\triangleq\{W^{[pq]}:(p-n)(q-m)=0\}
\end{eqnarray}
In other words, the set $\mathcal{W}^{nm}$ contains only those messages that either originate at transmitter $m$ or are destined for receiver $n$. Note that the $MN$ sets are not disjoint and that each set contains $M+N-1$ elements.

We will determine an outerbound for the total degrees of freedom achievable by each of the message sets when all other messages are eliminated. In other words, consider the $X$ channel when the only messages that need to be communicated are those that belong to the set $\mathcal{W}^{nm}$. Note that eliminating some messages cannot hurt the rates achievable by the remaining messages, as shown in \cite{Jafar_Shamai, Cadambe_Jafar_int}. Now we show that the total number of degrees of freedom of all messages in a set $\mathcal{W}^{nm}$ is no more than $\max(A^t_m,A^r_n) $.

\begin{figure}[!bp]
\begin{center}\setlength{\unitlength}{0.00036667in}
\begingroup\makeatletter\ifx\SetFigFont\undefined%
\gdef\SetFigFont#1#2#3#4#5{%
  \reset@font\fontsize{#1}{#2pt}%
  \fontfamily{#3}\fontseries{#4}\fontshape{#5}%
  \selectfont}%
\fi\endgroup%
{\renewcommand{\dashlinestretch}{30}
\begin{picture}(17990,5853)(0,-10)
\put(9087,4638){\makebox(0,0)[lb]{{\SetFigFont{5}{6.0}{\familydefault}{\mddefault}{\updefault}$W^{[21]}$}}}
\put(9087,4263){\makebox(0,0)[lb]{{\SetFigFont{5}{6.0}{\familydefault}{\mddefault}{\updefault}$W^{[31]}$}}}
\put(9087,5013){\makebox(0,0)[lb]{{\SetFigFont{5}{6.0}{\familydefault}{\mddefault}{\updefault}$W^{[11]}$}}}
\put(237,4563){\makebox(0,0)[lb]{{\SetFigFont{5}{6.0}{\familydefault}{\mddefault}{\updefault}$W^{[21]}$}}}
\put(237,4188){\makebox(0,0)[lb]{{\SetFigFont{5}{6.0}{\familydefault}{\mddefault}{\updefault}$W^{[31]}$}}}
\put(237,4938){\makebox(0,0)[lb]{{\SetFigFont{5}{6.0}{\familydefault}{\mddefault}{\updefault}$W^{[11]}$}}}
\dashline{60.000}(16887,2463)(14487,2463)(14487,3213)
\path(14517.000,3093.000)(14487.000,3213.000)(14457.000,3093.000)
\dashline{60.000}(14487,2463)(14487,1713)
\path(14457.000,1833.000)(14487.000,1713.000)(14517.000,1833.000)
\path(11037,4713)(13737,4713)
\path(11037,3213)(13737,4713)
\path(11037,4713)(13737,3213)
\path(11037,4713)(13737,1713)
\path(9987,4713)(10362,4713)
\path(10242.000,4683.000)(10362.000,4713.000)(10242.000,4743.000)
\path(8862,5313)(9987,5313)(9987,4188)
	(8862,4188)(8862,5313)
\path(14337,4713)(14712,4713)
\path(14592.000,4683.000)(14712.000,4713.000)(14592.000,4743.000)
\path(9987,3213)(10362,3213)
\path(10242.000,3183.000)(10362.000,3213.000)(10242.000,3243.000)
\path(14412,3213)(14787,3213)
\path(14667.000,3183.000)(14787.000,3213.000)(14667.000,3243.000)
\path(8937,3438)(9912,3438)(9912,2988)
	(8937,2988)(8937,3438)
\path(14337,1713)(14712,1713)
\path(14592.000,1683.000)(14712.000,1713.000)(14592.000,1743.000)
\path(13907,5609)(13757,4784)
\path(13748.950,4907.431)(13757.000,4784.000)(13807.982,4896.698)
\path(14787,3438)(15612,3438)(15612,3063)
	(14787,3063)(14787,3438)
\path(14712,5388)(15687,5388)(15687,3888)
	(14712,3888)(14712,5388)
\path(14712,1938)(15537,1938)(15537,1563)
	(14712,1563)(14712,1938)
\path(2187,4638)(4887,4638)
\path(2187,3138)(4887,3138)
\path(2187,3138)(4887,1638)
\path(2187,3138)(4887,4638)
\path(2187,4638)(4887,3138)
\path(2187,4638)(4887,1638)
\path(1137,4638)(1512,4638)
\path(1392.000,4608.000)(1512.000,4638.000)(1392.000,4668.000)
\path(12,5238)(1137,5238)(1137,4113)
	(12,4113)(12,5238)
\path(5487,4638)(5862,4638)
\path(5742.000,4608.000)(5862.000,4638.000)(5742.000,4668.000)
\path(1137,3138)(1512,3138)
\path(1392.000,3108.000)(1512.000,3138.000)(1392.000,3168.000)
\path(5562,3138)(5937,3138)
\path(5817.000,3108.000)(5937.000,3138.000)(5817.000,3168.000)
\path(87,3363)(1062,3363)(1062,2913)
	(87,2913)(87,3363)
\path(5487,1638)(5862,1638)
\path(5742.000,1608.000)(5862.000,1638.000)(5742.000,1668.000)
\path(5937,3363)(6762,3363)(6762,2988)
	(5937,2988)(5937,3363)
\path(5862,1863)(6687,1863)(6687,1488)
	(5862,1488)(5862,1863)
\path(5937,5013)(6762,5013)(6762,4188)
	(5937,4188)(5937,5013)
\put(17037,2388){\makebox(0,0)[lb]{{\SetFigFont{5}{6.0}{\rmdefault}{\mddefault}{\updefault}$W^{[12]}$}}}
\put(15462,2538){\makebox(0,0)[lb]{{\SetFigFont{5}{6.0}{\familydefault}{\mddefault}{\updefault}Genie}}}
\put(12012,63){\makebox(0,0)[lb]{{\SetFigFont{6}{7.2}{\sfdefault}{\mddefault}{\updefault}               (b)}}}
\put(3012,63){\makebox(0,0)[lb]{{\SetFigFont{6}{7.2}{\sfdefault}{\mddefault}{\updefault}               (a)}}}
\put(13812,4638){\makebox(0,0)[lb]{{\SetFigFont{5}{6.0}{\rmdefault}{\mddefault}{\updefault}$Y^{[1]}$}}}
\put(13812,3138){\makebox(0,0)[lb]{{\SetFigFont{5}{6.0}{\rmdefault}{\mddefault}{\updefault}$Y^{[2]}$}}}
\put(13812,1638){\makebox(0,0)[lb]{{\SetFigFont{5}{6.0}{\rmdefault}{\mddefault}{\updefault}$Y^{[3]}$}}}
\put(10437,4638){\makebox(0,0)[lb]{{\SetFigFont{5}{6.0}{\rmdefault}{\mddefault}{\updefault}$X^{[1]}$}}}
\put(10437,3138){\makebox(0,0)[lb]{{\SetFigFont{5}{6.0}{\rmdefault}{\mddefault}{\updefault}$X^{[2]}$}}}
\put(11412,738){\makebox(0,0)[lb]{{\SetFigFont{5}{6.0}{\familydefault}{\mddefault}{\updefault}$W^{[22]}=\phi$}}}
\put(11412,1038){\makebox(0,0)[lb]{{\SetFigFont{5}{6.0}{\familydefault}{\mddefault}{\updefault}$W^{[32]}=\phi$}}}
\put(9012,3138){\makebox(0,0)[lb]{{\SetFigFont{5}{6.0}{\familydefault}{\mddefault}{\updefault}$W^{[12]}$}}}
\put(14862,4413){\makebox(0,0)[lb]{{\SetFigFont{5}{6.0}{\familydefault}{\mddefault}{\updefault}$\widehat{W}^{[21]}$}}}
\put(14787,1638){\makebox(0,0)[lb]{{\SetFigFont{5}{6.0}{\familydefault}{\mddefault}{\updefault}$\widehat{W}^{[31]}$}}}
\put(14862,3138){\makebox(0,0)[lb]{{\SetFigFont{5}{6.0}{\familydefault}{\mddefault}{\updefault}$\widehat{W}^{[21]}$}}}
\put(14862,4113){\makebox(0,0)[lb]{{\SetFigFont{5}{6.0}{\familydefault}{\mddefault}{\updefault}$\widehat{W}^{[31]}$}}}
\put(14862,5088){\makebox(0,0)[lb]{{\SetFigFont{5}{6.0}{\familydefault}{\mddefault}{\updefault}$\widehat{W}^{[11]}$}}}
\put(13812,5688){\makebox(0,0)[lb]{{\SetFigFont{6}{7.2}{\rmdefault}{\mddefault}{\updefault}Reduce noise}}}
\put(4962,4563){\makebox(0,0)[lb]{{\SetFigFont{5}{6.0}{\rmdefault}{\mddefault}{\updefault}$Y^{[1]}$}}}
\put(4962,3063){\makebox(0,0)[lb]{{\SetFigFont{5}{6.0}{\rmdefault}{\mddefault}{\updefault}$Y^{[2]}$}}}
\put(4962,1563){\makebox(0,0)[lb]{{\SetFigFont{5}{6.0}{\rmdefault}{\mddefault}{\updefault}$Y^{[3]}$}}}
\put(1587,4563){\makebox(0,0)[lb]{{\SetFigFont{5}{6.0}{\rmdefault}{\mddefault}{\updefault}$X^{[1]}$}}}
\put(1587,3063){\makebox(0,0)[lb]{{\SetFigFont{5}{6.0}{\rmdefault}{\mddefault}{\updefault}$X^{[2]}$}}}
\put(2562,663){\makebox(0,0)[lb]{{\SetFigFont{5}{6.0}{\familydefault}{\mddefault}{\updefault}$W^{[22]}=\phi$}}}
\put(2562,963){\makebox(0,0)[lb]{{\SetFigFont{5}{6.0}{\familydefault}{\mddefault}{\updefault}$W^{[32]}=\phi$}}}
\put(162,3063){\makebox(0,0)[lb]{{\SetFigFont{5}{6.0}{\familydefault}{\mddefault}{\updefault}$W^{[12]}$}}}
\put(6012,4713){\makebox(0,0)[lb]{{\SetFigFont{5}{6.0}{\familydefault}{\mddefault}{\updefault}$\widehat{W}^{[11]}$}}}
\put(6012,4338){\makebox(0,0)[lb]{{\SetFigFont{5}{6.0}{\familydefault}{\mddefault}{\updefault}$\widehat{W}^{[12]}$}}}
\put(5937,1563){\makebox(0,0)[lb]{{\SetFigFont{5}{6.0}{\familydefault}{\mddefault}{\updefault}$\widehat{W}^{[31]}$}}}
\put(6012,3063){\makebox(0,0)[lb]{{\SetFigFont{5}{6.0}{\familydefault}{\mddefault}{\updefault}$\widehat{W}^{[21]}$}}}
\put(14862,4788){\makebox(0,0)[lb]{{\SetFigFont{5}{6.0}{\familydefault}{\mddefault}{\updefault}$\widehat{W}^{[12]}$}}}
\end{picture}
}\end{center}
\caption{(a) $2 \times 3$ $X$ channel with pair $(1,1)$ active, (b) Converse argument in $2 \times 3$ $X$ channel with pair $(1,1)$ active}
\label{figure:activepair}
\end{figure}

Consider any reliable coding scheme in the $X$ channel where all messages not in the set $\mathcal{W}^{nm}$ are eliminated. Now, suppose a genie provides all the messages $W^{[nq]}, q\in\{1,2,\cdots,m-1,m+1,m+2,\cdots,M\}$ to each of the receivers $1,2,\cdots, n-1, n+1, n+2 \cdots N$. Then, receivers $1,2,3,\ldots,n-1,n+1, \ldots N$ can cancel the interference caused by $X^{[1]},X^{[2]},\cdots X^{[m-1]},X^{[m+1]} \ldots X^{[M]}$ so that, effectively, the receiver $p$ obtains $\hat{Y}^{[p]}$ from the received signal where
$$ \hat{Y}^{[p]} = H^{[pm]} X^{[m]} + Z^{[p]}$$
where $p\in\{1,2,\cdots,n-1,n+1,n+2,\cdots N\}$. 
Also using the coding scheme, receiver $n$ can decode its desired messages $W^{[nt]}, t=1,2,3 \ldots M$. Therefore, receiver $n$ can subtract the effect of $X^{[1]}, X^{[2]},\ldots,X^{[m-1]},X^{[m+1]}, \ldots X^{[M]}$ from the received signal so that it obtains $\hat{Y}^{[n]}$ where
$$ \hat{Y}^{[n]} = H^{[nm]} X^{[m]} + Z^{[n]}$$
Notice that receivers $p\neq n$ are able to decode messages $W^{[pm]}$ from $\hat{Y}^{[p]}$. Now, we can reduce the noise at receiver $n$ and if $A^r_n<A^t_m$ we add antennas at receiver $n$ so that it has $\max(A^t_m, A^r_n)$ antennas. By reducing noise and adding antennas we can ensure that $\hat{Y}^{[p]}, p\neq n$ are degraded versions of $\hat{Y}^{[n]}$ (for the details of this argument in the multiple antenna case, see \cite{Jafar_Fakhereddin}). In other words, by reducing noise and possibly adding antennas, we can ensure that receiver $n$ can decode all messages $W^{[pm]}$.  Note that the performance of the original coding scheme cannot deteriorate because of the genie or from reducing the noise or from adding antennas and therefore the converse argument is not affected. 
We have now shown that in a genie-aided channel with reduced noise (see Figure \ref{figure:activepair}), receiver $m$ is able to decode all the messages in the set $\mathcal{W}^{nm}$ when these are the only messages present. This implies that degrees of freedom of the messages in the set $\mathcal{W}^{nm}$ lies within the degrees of freedom region of the multiple access channel with transmitters $1,2 \ldots M$ and receiver $n$. Since receiver $n$ has $\max(A^t_m, A^r_n)$ antennas the total number of degrees of freedom for all messages in the set $\mathcal{W}^{nm}$ cannot be more than $\max(A^t_m, A^r_n)$. This gives us the outerbound
\begin{eqnarray}
\max_{d_{ij} \in \mathcal{D}}\sum_{q=1}^{N} d_{qm} + \displaystyle\sum_{p=1}^{M} d_{np} - d_{nm} & \leq \max(A^t_m, A^r_n)
\end{eqnarray}
Repeating the arguments for each $m,n$ we arrive at the result of Theorem \ref{thm:outerbound}.
\end{proof}

Since our focus in this paper is on the total degrees of freedom for the case when all nodes have one antenna, the following corollary establishes the needed outerbound.

\begin{corollary}
The total number of degrees of freedom of the $X$ channel with $M$ transmitters and $N$ receivers and $1$ antenna at each node, is upper bounded by $\frac{MN}{M+N-1}$ i.e.
\begin{eqnarray*} \max_{d_{ij} \in \mathcal{D}} \displaystyle\sum_{ij} d_{ij} \leq \frac{MN}{M+N-1} \end{eqnarray*}
\end{corollary}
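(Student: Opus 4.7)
The plan is to obtain the corollary by summing the $MN$ inequalities furnished by Theorem~\ref{thm:outerbound}. With one antenna at every node we have $A^t_m = A^r_n = 1$ for all $m,n$, so $\max(A^t_m, A^r_n) = 1$, and Theorem~\ref{thm:outerbound} specializes to
\begin{equation*}
\sum_{q=1}^{N} d_{qm} + \sum_{p=1}^{M} d_{np} - d_{nm} \;\leq\; 1, \qquad \forall (m,n)\in\{1,\ldots,M\}\times\{1,\ldots,N\}.
\end{equation*}

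Next I would sum this bound over all $MN$ choices of $(m,n)$ to obtain a single scalar inequality in the total degrees of freedom $S \triangleq \sum_{i,j} d_{ji}$. Interpreting the three terms on the left-hand side: the first term $\sum_q d_{qm}$ counts the degrees of freedom of every message originating at transmitter $m$, and summing this over all $n$ multiplies it by $N$; summing over $m$ then contributes $N\cdot S$. Similarly, $\sum_p d_{np}$ represents all degrees of freedom destined for receiver $n$, and summing over $m$ and $n$ contributes $M\cdot S$. The subtracted term $d_{nm}$ sums to exactly $S$ once over all $(m,n)$. Putting the three contributions together gives
\begin{equation*}
(N + M - 1)\, S \;\leq\; MN,
\end{equation*}
which rearranges to $S \leq \frac{MN}{M+N-1}$, as claimed.

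There is no real obstacle here beyond the bookkeeping of the double sum; the only thing to double-check is that each $d_{nm}$ gets counted the intended number of times when the three pieces are combined, which the index-counting above confirms. The step to be careful with is that we are bounding $\max_{[(d_{ji})] \in \mathcal{D}} \sum_{i,j} d_{ji}$, not bounding $S$ for a fixed point; but since the inequality from Theorem~\ref{thm:outerbound} holds for every $[(d_{ji})] \in \mathcal{D}$, the summed inequality also holds for every such point, and therefore holds for the maximizer.
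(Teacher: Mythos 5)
Your proposal is correct and follows exactly the paper's own argument: specialize Theorem~\ref{thm:outerbound} to $A^t_m=A^r_n=1$ and sum the $MN$ inequalities, with the double-sum bookkeeping yielding $(M+N-1)S\le MN$. The index-counting you carry out is accurate, so nothing further is needed.
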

\begin{proof}
The bound can be obtained by summing all the $MN$ inequalities describing the outerbound of the degrees of freedom region and setting $A^t_m=A^r_n=1$ for all transmitters and receivers.
\end{proof}

The outerbound of Theorem \ref{thm:outerbound} is not only useful for the total number of degrees of freedom, but rather it bounds the entire degree of freedom region of the $M\times N$ user $X$ network.  In other words, Theorem \ref{thm:outerbound} provides an outerbound for any fully connected distributed single hop network under the given system model. For example, consider a hypothetical channel with $3$ single antenna transmitters and $3$ single antenna receivers, and 6 messages $W^{[ij]}, i \neq j$. i.e the $3 \times 3$ user $X$ channel with $W^{[11]}=W^{[22]}=W^{[33]}=\phi$. The solution to the following linear programming problem provides an outerbound for the total number of degrees of freedom of this channel.
\begin{eqnarray*} 
 & \max_{d_{ij}} \displaystyle\sum_{i \neq j} d_{ij}  & \\
 \textrm{ s.t } & \displaystyle\sum_{q=1}^{3} d_{mq} + \displaystyle\sum_{p=1}^{3} d_{pl} - d_{ml}  \leq 1 & \forall (m,l) \in \{1,2,3\}\times \{1,2,3\}  
\end{eqnarray*}
In many cases of interest these outerbounds can be shown to be tight. For example, in the $2\times 2$ user $X$ network, the outerbound of Theorem \ref{thm:outerbound} is shown to represent the entire degrees of freedom region \cite{Jafar_Shamai}.

\section{Achievable Schemes - Propagation Delay Example}\label{sec:Xdelay}
Next we pursue Objective $2$ - to construct an $X$ network where a simple time division scheme will achieve the outerbound on the degrees of freedom by a careful choice of propagation delays. Note that, like the classical interference channel model, the $X$ network model we consider in the rest of this paper assumes zero propagation delays. However, we make an exception in this section by considering non-zero propagation delays to create an illustrative example. As explained in \cite{Jafar_Shamai_Cadambe}, this example is relevant to the case with zero delays as well.

\begin{figure}[!tbp]
\begin{center}\setlength{\unitlength}{0.00050000in}
\begingroup\makeatletter\ifx\SetFigFont\undefined%
\gdef\SetFigFont#1#2#3#4#5{%
  \reset@font\fontsize{#1}{#2pt}%
  \fontfamily{#3}\fontseries{#4}\fontshape{#5}%
  \selectfont}%
\fi\endgroup%
{\renewcommand{\dashlinestretch}{30}
\begin{picture}(12699,4584)(0,-10)
\path(11487,3600)(12087,3600)(12087,3150)
	(11487,3150)(11487,3600)
\path(10887,750)(11487,750)(11487,300)
	(10887,300)(10887,750)
\path(11487,750)(12087,750)(12087,300)
	(11487,300)(11487,750)
\path(8487,750)(9087,750)(9087,300)
	(8487,300)(8487,750)
\path(9687,750)(10287,750)(10287,300)
	(9687,300)(9687,750)
\path(12687,300)(12087,300)(12087,750)(12687,750)
\path(10287,750)(10887,750)(10887,300)
	(10287,300)(10287,750)
\path(9087,750)(9687,750)(9687,300)
	(9087,300)(9087,750)
\path(10887,3600)(11487,3600)(11487,3150)
	(10887,3150)(10887,3600)
\path(8487,3600)(9087,3600)(9087,3150)
	(8487,3150)(8487,3600)
\path(9687,3600)(10287,3600)(10287,3150)
	(9687,3150)(9687,3600)
\path(12687,3150)(12087,3150)(12087,3600)(12687,3600)
\path(10287,3600)(10887,3600)(10887,3150)
	(10287,3150)(10287,3600)
\path(9087,3600)(9687,3600)(9687,3150)
	(9087,3150)(9087,3600)
\path(8869.096,917.521)(8787.000,825.000)(8902.977,868.003)
\path(8787,825)(10212,1800)(10587,825)
\path(10515.922,926.232)(10587.000,825.000)(10571.923,947.771)
\path(9487.415,3002.772)(9387.000,3075.000)(9441.611,2964.015)
\path(9387,3075)(10212,2100)(11112,3075)
\path(11052.650,2966.475)(11112.000,3075.000)(11008.562,3007.172)
\put(11487,3300){\makebox(0,0)[lb]{{\SetFigFont{6}{7.2}{\rmdefault}{\mddefault}{\updefault}   $W^{[12]}$}}}
\put(11487,450){\makebox(0,0)[lb]{{\SetFigFont{6}{7.2}{\rmdefault}{\mddefault}{\updefault}   $W^{[22]}$}}}
\put(12237,450){\makebox(0,0)[lb]{{\SetFigFont{6}{7.2}{\rmdefault}{\mddefault}{\updefault}  $\ldots$}}}
\put(9162,450){\makebox(0,0)[lb]{{\SetFigFont{6}{7.2}{\rmdefault}{\mddefault}{\updefault}$W^{[21]}$}}}
\put(9762,450){\makebox(0,0)[lb]{{\SetFigFont{6}{7.2}{\rmdefault}{\mddefault}{\updefault}$W^{[22]}$}}}
\put(10362,375){\makebox(0,0)[lb]{{\SetFigFont{5}{6.0}{\rmdefault}{\mddefault}{\updefault}$W^{[12]}$}}}
\put(10287,525){\makebox(0,0)[lb]{{\SetFigFont{5}{6.0}{\rmdefault}{\mddefault}{\updefault} $W^{[11]}$}}}
\put(9162,3225){\makebox(0,0)[lb]{{\SetFigFont{5}{6.0}{\rmdefault}{\mddefault}{\updefault}$W^{[22]}$}}}
\put(8487,3300){\makebox(0,0)[lb]{{\SetFigFont{6}{7.2}{\rmdefault}{\mddefault}{\updefault} $W^{[11]}$}}}
\put(10962,3225){\makebox(0,0)[lb]{{\SetFigFont{5}{6.0}{\rmdefault}{\mddefault}{\updefault}$W^{[22]}$}}}
\put(9687,3300){\makebox(0,0)[lb]{{\SetFigFont{6}{7.2}{\rmdefault}{\mddefault}{\updefault}   $W^{[12]}$}}}
\put(10887,3375){\makebox(0,0)[lb]{{\SetFigFont{5}{6.0}{\rmdefault}{\mddefault}{\updefault} $W^{[21]}$}}}
\put(9162,3375){\makebox(0,0)[lb]{{\SetFigFont{5}{6.0}{\rmdefault}{\mddefault}{\updefault}$W^{[21]}$}}}
\put(10362,3300){\makebox(0,0)[lb]{{\SetFigFont{6}{7.2}{\rmdefault}{\mddefault}{\updefault}$W^{[11]}$}}}
\put(12237,3300){\makebox(0,0)[lb]{{\SetFigFont{6}{7.2}{\rmdefault}{\mddefault}{\updefault}  $\ldots$}}}
\put(8562,525){\makebox(0,0)[lb]{{\SetFigFont{5}{6.0}{\rmdefault}{\mddefault}{\updefault}$W^{[11]}$}}}
\put(10962,450){\makebox(0,0)[lb]{{\SetFigFont{6}{7.2}{\rmdefault}{\mddefault}{\updefault}$W^{[21]}$}}}
\put(8562,375){\makebox(0,0)[lb]{{\SetFigFont{5}{6.0}{\rmdefault}{\mddefault}{\updefault}$W^{[12]}$}}}
\put(9612,1875){\makebox(0,0)[lb]{{\SetFigFont{7}{8.4}{\rmdefault}{\mddefault}{\updefault}Interference Alignment}}}
\put(8487,2925){\makebox(0,0)[lb]{{\SetFigFont{7}{8.4}{\rmdefault}{\mddefault}{\updefault}0}}}
\put(9087,2925){\makebox(0,0)[lb]{{\SetFigFont{7}{8.4}{\rmdefault}{\mddefault}{\updefault}1}}}
\put(11487,2925){\makebox(0,0)[lb]{{\SetFigFont{7}{8.4}{\rmdefault}{\mddefault}{\updefault}2}}}
\put(9687,2925){\makebox(0,0)[lb]{{\SetFigFont{7}{8.4}{\rmdefault}{\mddefault}{\updefault}2}}}
\put(12087,2925){\makebox(0,0)[lb]{{\SetFigFont{7}{8.4}{\rmdefault}{\mddefault}{\updefault}0}}}
\put(10287,2925){\makebox(0,0)[lb]{{\SetFigFont{7}{8.4}{\rmdefault}{\mddefault}{\updefault}0}}}
\put(8487,75){\makebox(0,0)[lb]{{\SetFigFont{7}{8.4}{\rmdefault}{\mddefault}{\updefault}0}}}
\put(9087,75){\makebox(0,0)[lb]{{\SetFigFont{7}{8.4}{\rmdefault}{\mddefault}{\updefault}1}}}
\put(11487,75){\makebox(0,0)[lb]{{\SetFigFont{7}{8.4}{\rmdefault}{\mddefault}{\updefault}2}}}
\put(9687,75){\makebox(0,0)[lb]{{\SetFigFont{7}{8.4}{\rmdefault}{\mddefault}{\updefault}2}}}
\put(12087,75){\makebox(0,0)[lb]{{\SetFigFont{7}{8.4}{\rmdefault}{\mddefault}{\updefault}0}}}
\put(10287,75){\makebox(0,0)[lb]{{\SetFigFont{7}{8.4}{\rmdefault}{\mddefault}{\updefault}0}}}
\put(10887,75){\makebox(0,0)[lb]{{\SetFigFont{7}{8.4}{\rmdefault}{\mddefault}{\updefault}1}}}
\put(10887,2925){\makebox(0,0)[lb]{{\SetFigFont{7}{8.4}{\rmdefault}{\mddefault}{\updefault}1}}}
\path(12,675)(612,675)(612,225)
	(12,225)(12,675)
\put(12,375){\makebox(0,0)[lb]{{\SetFigFont{6}{7.2}{\rmdefault}{\mddefault}{\updefault} $W^{[22]}$}}}
\path(612,675)(1212,675)(1212,225)
	(612,225)(612,675)
\put(612,375){\makebox(0,0)[lb]{{\SetFigFont{6}{7.2}{\rmdefault}{\mddefault}{\updefault} $W^{[12]}$}}}
\path(1212,675)(1812,675)(1812,225)
	(1212,225)(1212,675)
\path(1812,675)(2412,675)(2412,225)
	(1812,225)(1812,675)
\put(1812,375){\makebox(0,0)[lb]{{\SetFigFont{6}{7.2}{\rmdefault}{\mddefault}{\updefault} $W^{[22]}$}}}
\path(2412,675)(3012,675)(3012,225)
	(2412,225)(2412,675)
\put(2412,375){\makebox(0,0)[lb]{{\SetFigFont{6}{7.2}{\rmdefault}{\mddefault}{\updefault} $W^{[12]}$}}}
\path(3012,675)(3612,675)(3612,225)
	(3012,225)(3012,675)
\path(4212,225)(3612,225)(3612,675)(4212,675)
\put(3687,375){\makebox(0,0)[lb]{{\SetFigFont{6}{7.2}{\rmdefault}{\mddefault}{\updefault}   $\ldots$}}}
\put(12,0){\makebox(0,0)[lb]{{\SetFigFont{7}{8.4}{\rmdefault}{\mddefault}{\updefault}0}}}
\put(612,0){\makebox(0,0)[lb]{{\SetFigFont{7}{8.4}{\rmdefault}{\mddefault}{\updefault}1}}}
\put(3012,0){\makebox(0,0)[lb]{{\SetFigFont{7}{8.4}{\rmdefault}{\mddefault}{\updefault}2}}}
\put(1212,0){\makebox(0,0)[lb]{{\SetFigFont{7}{8.4}{\rmdefault}{\mddefault}{\updefault}2}}}
\put(3612,0){\makebox(0,0)[lb]{{\SetFigFont{7}{8.4}{\rmdefault}{\mddefault}{\updefault}0}}}
\put(1812,0){\makebox(0,0)[lb]{{\SetFigFont{7}{8.4}{\rmdefault}{\mddefault}{\updefault}0}}}
\put(2412,0){\makebox(0,0)[lb]{{\SetFigFont{7}{8.4}{\rmdefault}{\mddefault}{\updefault}1}}}
\path(12,3600)(612,3600)(612,3150)
	(12,3150)(12,3600)
\put(12,3300){\makebox(0,0)[lb]{{\SetFigFont{6}{7.2}{\rmdefault}{\mddefault}{\updefault} $W^{[11]}$}}}
\path(612,3600)(1212,3600)(1212,3150)
	(612,3150)(612,3600)
\put(612,3300){\makebox(0,0)[lb]{{\SetFigFont{6}{7.2}{\rmdefault}{\mddefault}{\updefault} $W^{[21]}$}}}
\path(1212,3600)(1812,3600)(1812,3150)
	(1212,3150)(1212,3600)
\path(1812,3600)(2412,3600)(2412,3150)
	(1812,3150)(1812,3600)
\put(1812,3300){\makebox(0,0)[lb]{{\SetFigFont{6}{7.2}{\rmdefault}{\mddefault}{\updefault} $W^{[11]}$}}}
\path(2412,3600)(3012,3600)(3012,3150)
	(2412,3150)(2412,3600)
\put(2412,3300){\makebox(0,0)[lb]{{\SetFigFont{6}{7.2}{\rmdefault}{\mddefault}{\updefault} $W^{[21]}$}}}
\path(3012,3600)(3612,3600)(3612,3150)
	(3012,3150)(3012,3600)
\path(4212,3150)(3612,3150)(3612,3600)(4212,3600)
\path(912,2250)(87,2925)
\path(198.872,2872.230)(87.000,2925.000)(160.878,2825.793)
\path(912,2250)(687,2850)
\path(757.225,2748.174)(687.000,2850.000)(701.045,2727.107)
\path(912,2250)(3537,2925)
\path(3428.252,2866.060)(3537.000,2925.000)(3413.310,2924.170)
\dottedline{45}(987,2625)(1737,2625)
\put(3762,3300){\makebox(0,0)[lb]{{\SetFigFont{6}{7.2}{\rmdefault}{\mddefault}{\updefault}  $\ldots$}}}
\put(612,2925){\makebox(0,0)[lb]{{\SetFigFont{7}{8.4}{\rmdefault}{\mddefault}{\updefault}1}}}
\put(3012,2925){\makebox(0,0)[lb]{{\SetFigFont{7}{8.4}{\rmdefault}{\mddefault}{\updefault}2}}}
\put(1212,2925){\makebox(0,0)[lb]{{\SetFigFont{7}{8.4}{\rmdefault}{\mddefault}{\updefault}2}}}
\put(3612,2925){\makebox(0,0)[lb]{{\SetFigFont{7}{8.4}{\rmdefault}{\mddefault}{\updefault}0}}}
\put(2412,2925){\makebox(0,0)[lb]{{\SetFigFont{7}{8.4}{\rmdefault}{\mddefault}{\updefault}1}}}
\put(12,2925){\makebox(0,0)[lb]{{\SetFigFont{7}{8.4}{\rmdefault}{\mddefault}{\updefault}0}}}
\put(1812,2925){\makebox(0,0)[lb]{{\SetFigFont{7}{8.4}{\rmdefault}{\mddefault}{\updefault}0}}}
\put(537,2025){\makebox(0,0)[lb]{{\SetFigFont{7}{8.4}{\rmdefault}{\mddefault}{\updefault}time in modulo 3}}}
\path(4812,3450)(7812,450)
\path(7705.934,513.640)(7812.000,450.000)(7748.360,556.066)
\path(4812,3450)(7812,3450)
\path(7692.000,3420.000)(7812.000,3450.000)(7692.000,3480.000)
\path(4812,450)(7812,450)
\path(7692.000,420.000)(7812.000,450.000)(7692.000,480.000)
\path(6282,4290)(6057,3840)
\path(6083.833,3960.748)(6057.000,3840.000)(6137.498,3933.915)
\path(7748.360,3343.934)(7812.000,3450.000)(7705.934,3386.360)
\path(7812,3450)(4812,450)
\put(6162,600){\makebox(0,0)[lb]{{\SetFigFont{7}{8.4}{\rmdefault}{\mddefault}{\updefault}2}}}
\put(5637,1500){\makebox(0,0)[lb]{{\SetFigFont{7}{8.4}{\rmdefault}{\mddefault}{\updefault}1}}}
\put(5787,2700){\makebox(0,0)[lb]{{\SetFigFont{7}{8.4}{\rmdefault}{\mddefault}{\updefault}0}}}
\put(6012,3600){\makebox(0,0)[lb]{{\SetFigFont{7}{8.4}{\rmdefault}{\mddefault}{\updefault}0}}}
\put(5637,4425){\makebox(0,0)[lb]{{\SetFigFont{7}{8.4}{\rmdefault}{\mddefault}{\updefault}Propogation delay (modulo 3)}}}
\put(4362,375){\makebox(0,0)[lb]{{\SetFigFont{7}{8.4}{\rmdefault}{\mddefault}{\updefault}Tx 2}}}
\put(7887,3375){\makebox(0,0)[lb]{{\SetFigFont{7}{8.4}{\rmdefault}{\mddefault}{\updefault}Rx 1}}}
\put(7887,375){\makebox(0,0)[lb]{{\SetFigFont{7}{8.4}{\rmdefault}{\mddefault}{\updefault}Rx 2}}}
\put(4362,3375){\makebox(0,0)[lb]{{\SetFigFont{7}{8.4}{\rmdefault}{\mddefault}{\updefault}Tx 1}}}
\end{picture}
}\end{center}
\caption{Achieving 4/3 degrees of freedom over the $2$ user $X$ channel with propagation delays}
\label{fig:Xdelay}
\end{figure}

Consider a $2$ user $X$ channel with propagation delays between the transmitters and the receivers. Let $T_{ji}$ represent the propagation delay between transmitter $i$ and receiver $j$, where $i,j \in \{1,2\}$ and the delays are measured in units of the basic symbol duration. As usual, there are $4$ messages $W^{[ji]}$ in this $X$ channel, with $W^{[ji]}$ representing the message from transmitter $i$ to receiver $j$. Now, suppose the locations of the transmitters and receivers can be configured so that the propagation delays $T_{ji}$ satisfy the following relations. 
$$T_{11} = 3k, \mbox{ for some }k \in \mathbb{N} $$
$$T_{12} = 3l+1, \mbox{ for some }l \in \mathbb{N} $$
$$T_{21} = 3m, \mbox{ for some }m \in \mathbb{N} $$
$$T_{22} = 3p+2, \mbox{ for some }p \in \mathbb{N} $$
On this $X$ channel, all $4$ messages are encoded separately and will reach their desired receiver free of interference. This is accomplished as follows:\\
\emph{Transmitter 1:}
\begin{itemize}
\item Transmits one codeword symbol for message $W^{[11]}$ starting at $t = 3n, \forall n$.
\item Transmits a codeword symbol for message $W^{[21]}$ starting at $t = 3n+1 , \forall n$.
\end{itemize}
\emph{Transmitter 2:}
\begin{itemize}
\item Transmits a codeword symbol for message $W^{[12]}$ starting at $t = 3n+1, \forall n$.
\item Transmits a codeword symbol for message $W^{[22]}$ starting at $t = 3n, \forall n$.
\end{itemize}

With this transmission strategy, receiver $1$ receives codeword symbols corresponding to messages $W^{[11]}$ and $W^{[12]}$ in time slots satisfying $t=3n$ and $t=3n+2$ respectively (see figure \ref{fig:Xdelay}). The signals corresponding to undesired messages at this receiver, i.e. messages $W^{[21]}$ and $W^{[22]}$,  overlap in time slots $t = 3n+1$. In other words, the interference from $W^{[21]}$ and $W^{[11]}$ are aligned at receiver $1$. Similarly, from the point of view of receiver $2$, desired messages $W^{[21]}$ and $W^{[22]}$ arrive at at time slots  $t=3n+1 $ and $t=3n+2$, respectively, and the interfering messages $W^{[11]}$ and $W^{[12]}$ overlap at time slot $t=3n$. In this transmission scheme, each message is \emph{on} for $1/3$ of the time, and is received interference free at the receivers. Therefore, the scheme achieves $1/3$ degrees of freedom \emph{per message}, to achieve a \emph{total} of $4/3$ degrees of freedom in the $X$ network. From the outerbound we know this is the maximum degrees of freedom possible for the $2\times 2$ user $X$ network.

\section{Interference Alignment and Innerbounds on the Degrees of Freedom}
\label{sec:achievability}
The following is the main result of this section.
\begin{theorem}
\label{thm:achievability}
The $M \times N$ user $X$ network with single antenna nodes has $\frac{MN}{M+N-1}$ degrees of freedom.
\end{theorem}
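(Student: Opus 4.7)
The converse is in hand: summing the $MN$ inequalities of Theorem~\ref{thm:outerbound} with $A^t_m=A^r_n=1$ yields the corollary that the total DoF is at most $\frac{MN}{M+N-1}$, so only achievability remains. The plan is to construct a partial interference alignment scheme in the spirit of \cite{Cadambe_Jafar_int}, adapted to the $X$-network structure in which every transmitter has an independent message for every receiver, so that the interference at each receiver is built out of $M(N-1)$ distinct streams coming from all $M$ transmitters.

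I would work over a $\mu_n$-symbol time extension of the channel so that each scalar gain $H^{[ji]}$ becomes a $\mu_n\times\mu_n$ diagonal matrix $\barxH^{[ji]}$, and transmit $d_n$ independent streams for every message $W^{[ji]}$ through a $\mu_n\times d_n$ precoder $\barxV^{[ji]}$. The parameters are to be chosen so that $\mu_n = (M+N-1)d_n + o(d_n)$, which makes the total DoF $MN\,d_n/\mu_n$ tend to $\frac{MN}{M+N-1}$ as $n\to\infty$. At receiver $j$ the $M$ desired directions $\barxH^{[ji]}\barxV^{[ji]}$, $i=1,\ldots,M$, already occupy $Md_n$ dimensions, so to fit inside $\mu_n$ I must align all $M(N-1)$ interfering families $\barxH^{[ji]}\barxV^{[ki]}$ (for $k\neq j$) into a common subspace $B^{[j]}$ of dimension only $(N-1)d_n+o(d_n)$.

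The alignment is realized by taking the columns of every $\barxV^{[ki]}$ to be products of powers of the diagonal channel matrices $\barxH^{[pq]}$ applied to a single generic seed vector, with the exponent vectors ranging over an integer hypercube of side $n+1$ in $\Gamma$ coordinates, where $\Gamma$ counts the independent alignment relations enforced by the $N$ receivers in the $X$-network. The key closure property is that left-multiplying such a hypercube of product-vectors by any single $\barxH^{[ji]}$ enlarges it only in one exponent coordinate, and so contributes only a lower-order increment to the dimension; this lets the images $\barxH^{[ji]}\barxV^{[ki]}$ for $k\neq j$ and $i=1,\ldots,M$ all be absorbed into a common subspace of the required size $(N-1)d_n + o(d_n)$ at each receiver.

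The last step is to verify decodability: at each receiver $j$ the $Md_n$ desired vectors must be linearly independent of one another and of $B^{[j]}$, so that the receiver can zero-force the interference and separate its $M$ desired streams. This reduces to showing that a certain determinant, regarded as a polynomial in the channel coefficients, is not identically zero. I would establish this by exploiting the diagonal structure of the $\barxH^{[ji]}$ and the assumption that the scalar gains are drawn independently from a continuous distribution supported strictly between a nonzero minimum and a finite maximum: a nonvanishing polynomial in such generic variables is nonzero almost surely. This is where I expect the main obstacle to lie, since the $X$-network imposes many more simultaneous alignment constraints than the $K$-user interference channel of \cite{Cadambe_Jafar_int}, and some algebraic care is required to confirm that the desired and interfering products remain linearly independent. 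Once decodability is verified, letting $n\to\infty$ delivers $\frac{MN}{M+N-1}$ degrees of freedom and closes the theorem.
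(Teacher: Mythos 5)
Your proposal is correct and follows essentially the same route as the paper's achievability proof: a partial interference alignment scheme over long symbol extensions in which precoders are monomial products of the diagonal extended channel matrices applied to a generic seed vector with exponents ranging over a hypercube (the paper's Lemma \ref{lemma:intalign}), all cross interference at each receiver is absorbed into a common subspace of dimension $(N-1)d_n+o(d_n)$, and decodability is settled by a generic full-rank argument on monomial matrices (the paper's Lemma \ref{lemma:nonsingular}). The only cosmetic difference is that the paper allocates $(n+1)^{\Gamma}$ streams to transmitter $1$'s messages and $n^{\Gamma}$ to the rest, aligning all other transmitters' interference inside transmitter $1$'s, which is an instance of your $\mu_n=(M+N-1)d_n+o(d_n)$ accounting.
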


The converse for the theorem is already proved in the corollary to Theorem \ref{thm:outerbound}. 

The achievable scheme for the $X$ channels are based on interference alignment and zero-forcing. For general $M\times N$ user $X$ channel we provide a partial interference alignment based innerbound that approaches the outerbound as we increase the size of the supersymbols (channel extensions). While the degrees of freedom achieved by this scheme can come within $\epsilon$ of the degrees of freedom outerbound for any $\epsilon>0$, the two are never exactly equal. This is sufficient for a degree of freedom characterization, but it does not provide an $O(1)$ capacity characterization. In some cases (when either $M=2$ or $N=2$) we are able to create \emph{perfect} interference alignment schemes so that the degrees of freedom outerbound is \emph{exactly} achieved with a finite channel extension. 

We start with the perfect interference alignment scheme for the $M\times 2$ user $X$ network.
\subsection{Perfect Interference Alignment for the $M \times 2$ user $X$ network}
\label{subsec:achKx2}
The outerbound for the $M\times 2$ user $X$ channel states that it cannot achieve more than a total of $\frac{2M}{M+1}$ degrees of freedom. In this section we construct an interference alignment scheme which achieves exactly $\frac{1}{M+1}$ degrees of freedom for each of the $2M$ messages, thus exactly achieving the outerbound.

Consider a $M+1$ symbol extension of the channel formed by combining $M+1$ time slots into a super-symbol. This channel can be expressed as 
$$\xY^{[1]}(\kappa) = \displaystyle\sum_{m \in \{1,2 \ldots M\}} \xH^{[1m]}(\kappa) \xX^{[m]}(\kappa) + \xZ^{[1]}(\kappa)$$
$$\xY^{[2]}(\kappa) = \displaystyle\sum_{m \in \{1,2 \ldots M\}} \xH^{[2m]}(\kappa) \xX^{[m]}(\kappa) + \xZ^{[2]}(\kappa)$$
where $\xX^{[m]}(\kappa)$ is a $(M+1) \times 1$ column vector representing the $M+1$ symbol extension of the transmitted symbol $X^{[m]}$, i.e 
$$\xX^{[m]}(\kappa) \define \left[ \begin{array}{c} X^{[m]}(\kappa(M+1)+1) \\X^{[m]}(\kappa(M+1)+2)\\ \vdots \\ X^{[m]}((\kappa+1)(M+1)) \end{array}\right]$$ Similarly $\xY^{[j]}$ and $\xZ^{[j]}$ represent $M+1$ symbol extensions of the $Y^{[j]}$ and $Z^{[j]}$ respectively. 
$\xH^{[jm]}$ is a diagonal $(M+1)\times(M+1)$ matrix representing the $M+1$ symbol extension of the channel, i.e., 
$$ \xH^{[jm]}(\kappa) \define \left[ \begin{array}{cccc}  H^{[jm]}(\kappa(M+1)+1) & 0 & \ldots & 0\\
	0 & H^{[jm]}(\kappa(M+1)+2) & \ldots & 0\\
	\vdots & \cdots & \ddots & \vdots\\ 
	0 & 0& \cdots  & H^{[jm]}((\kappa+1)(M+1)) \end{array}\right] $$

We now describe an achievable scheme that achieves one degree of freedom $d_{jm}=1, j=1,2, m=1,2\ldots M$ for each message over this $M+1$ symbol extension, thus achieving a total of $2M$ degrees of freedom over $M+1$ symbols.

The encoding strategy is as follows. Transmitter $m$ encodes messages $W^{[1m]}$ and $W^{[2m]}$ as two independent streams $x^{[1m]}$ and $x^{[2m]}$ and respectively transmits these two streams along directions $\mathbf{v}^{[1m]}$ and $\mathbf{v}^{[2m]}$ i.e. 
$$\xX^{[m]} = x^{[1m]} \mathbf{v}^{[1m]} + x^{[2m]} \mathbf{v}^{[2m]} $$
The received message at receiver $j$ is 
$$\xY^{[j]} = \displaystyle\sum_{m = 1}^{M}  \xH^{[jm]} x^{[1m]} \mathbf{v}^{[1m]} + \displaystyle\sum_{m =1}^{M} \xH^{[2m]} x^{[2m]} \mathbf{v}^{[2m]} +  \xZ^{[j]}$$
where $j=1,2$.

Receiver $1$ decodes its $M$ desired messages by zero-forcing the all interference vectors  $\mathbf{v}^{[2m]}, m=1,2\ldots M$. Now to recover $M$ interference-free dimensions for desired signals from the $M+1$ dimensions of the received symbol, the dimension of the interference has to be no more than $1$. Therefore, vectors $\mathbf{v}^{[2m]},i=2,\ldots M$ are picked so that their corresponding interference terms at receiver $1$ perfectly align with the interference from transmitter $1$ - i.e $\xH^{[1m]} \mathbf{v}^{[2m]}$ lies along $\xH^{[11]} \mathbf{v}^{[21]}$ for all $m=2,\ldots,M$.
\begin{equation} \label{eqn:intalign1} \xH^{[1m]} \mathbf{v}^{[2m]}= \xH^{[11]} \mathbf{v}^{[21]}, m=2\ldots M \end{equation}
This ensures that, from the point of view of receiver $1$, all the interference terms $\xH^{[1m]} \mathbf{v}^{[2m]}$ lie along a single vector  $\xH^{[11]} \mathbf{v}^{[21]}$. Similarly, at receiver $2$, the maximum dimension of the interference is $1$. This is ensured by picking $\mathbf{v}^{[1m]},m\neq1$ as 
\begin{equation} \label{eqn:intalign2}\xH^{[2m]} \mathbf{v}^{[1m]}= \xH^{[21]} \mathbf{v}^{[11]}, m=2\ldots M\end{equation}
 
Now that we have ensured all interference is restricted to only one dimension at each receiver, this dimension can be zero-forced to eliminate all interference, leaving $M$ interference free dimensions to recover the $M$ desired messagesfor each receiver. What is needed is that the desired signal vectors are linearly independent of the interference. Therefore we need to pick $\xv^{[11]}$ and $\xv^{[21]}$ so that the following matrices are of full rank.
$$ \left [ \xH^{[11]} \xv^{[11]}~~~\xH^{[12]}\xv^{[12]} ~~~ \ldots ~~~\xH^{[1M]} \xv^{[1M]} ~~~ \xH^{[11]} \xv^{[21]}\right] \textrm{ at receiver 1} $$ 
$$ \left [ \xH^{[21]} \xv^{[21]}~~~\xH^{[22]}\xv^{[22]} ~~~ \ldots ~~~\xH^{[2M]} \xv^{[2M]} ~~~ \xH^{[21]} \xv^{[11]}\right] \textrm{ at receiver 2} $$
Note that the first $M$ column vectors of the above matrices represent the signal components, and the last column represents the aligned interference. We now pick the columns of $\xv^{[11]}$ and $\xv^{[21]}$ randomly from independent continuous distributions i.e.,
$$\mathbf{v}^{[11]} = \left[ \begin{array}{c} p_1 \\ p_2 \\ \vdots \\ p_{M+1} \end{array} \right]$$
$$\mathbf{v}^{[21]} = \left[ \begin{array}{c}  q_1 \\ q_2 \\ \vdots \\ q_{M+1}\end{array} \right]$$
An important observation here is that once $\xv^{[11]}$ and $\xv^{[21]}$ are picked as above, equations (\ref{eqn:intalign1}) and (\ref{eqn:intalign2}) can be used to pick $\xv^{[1m]}, m \neq 1$ with just causal channel knowledge. i.e.,  the $l$th component of the transmitted signal at any transmitter depends only on the first $l$ diagonal entries of $\xH^{[ji]}$ (in fact, it depends only on the $l$th diagonal entry ). The desired signal can now be shown to be linearly independent of the interference at both receivers almost surely. 

For example at receiver $1$, we need to show that matrix 
$$ \mathbf{\Lambda} = \left [ \xH^{[11]} \xv^{[11]}~~~\xH^{[12]}\xv^{[12]} ~~~ \ldots ~~~\xH^{[1K]} \xv^{[1K]} ~~~ \xH^{[11]} \xv^{[21]}\right]$$
has full rank. Since all channel matrices are diagonal and full rank, we can multiply by $(\xH^{[11]})^{-1}$ and use equations \ref{eqn:intalign1} and \ref{eqn:intalign2} to replace the above matrix by 
$$ \mathbf{\Lambda} = \left [  \xv^{[11]}~~~~(\xH^{[11]})^{-1}\xH^{[12]}\xH^{[21]} (\xH^{[22]})^{-1} \xv^{[11]} ~~~~ \ldots ~~~~ (\xH^{[11]})^{-1}\xH^{[1K]} \xH^{[21]} (\xH^{[2K})^{-1}  \xv^{[11]} ~~~~ \xv^{[21]}\right]$$
Now, notice that an element in the $l$th row of the above matrix is a monomial term in the random variables $H^{[1i]}_l, H^{[2i]}_l, p_l, q_l, i=1,2 \ldots M$ where $H^{[ji]}_l$ represents the diagonal entry in the $l$th row of $\xH^{[ji]}$. Also, all the monomial terms are unique, since $H^{[1m]}_l$ has power $1$ in the $m$th column and power $0$ in all other columns for $m=1,2 \ldots M$. The terms in the $M+1$th column are also unique since it is the only column with a positive power in $q_l$. Therefore, Lemma \ref{lemma:nonsingular} implies that the matrix has a full rank of $M+1$ almost surely.

Similarly, the desired signal can be shown to be linearly independent of the interference at receiver $2$ almost surely. Therefore, $2M$ independent streams are achievable over the $(M+1)$ symbol extension of the channel implying $\frac{2M}{M+1}$ degrees of freedom over the original channel.
Also, since the achievable scheme essentially creates $2M$ point-to-point links over a $M+1$ symbol extension of the channel, it provides an $\mathcal{O}(1)$ capacity characterization of this network as 
$$ C(\rho) = \frac{2M}{M+1} \log(\rho) + \mathcal{O}(1)$$
where $C(\rho)$ is the sum-capacity of the network as a function of transmit power $\rho$.

\subsection{Achievability for  $2 \times M$ $X$ network - Reciprocity of beamforming and zero-forcing based schemes}
\label{subsec:ach2xK}
Consider an $M\times N$ user $X$ network. We refer to this as the primal channel. Consider any achievable scheme on this channel based on beamforming and zero-forcing respectively. Specifically, consider any achievable scheme whose encoding strategy is of the form:
\begin{itemize}
\item Encoding - Transmitter $i$ encodes a message to receiver $j$ along linearly independent streams and beamforms these streams along linearly independent vectors. For example, the $k$th stream to receiver $i$ is encoded as $x^{[ji]}_{k}$ and beamformed along direction $ \xv^{[ji]}_k$  as $x^{[ji]}_k \xv^{[ji]}_k$.
\item Decoding - Receiver $j$ decodes all the desired streams through zero-forcing. For example, to decode the $k$th stream from transmitter $i$ i.e $x^{[ji]}_k$, the receiver projects the received vector along vector $\xu^{[ji]}_k$ which zero-forces all undesired streams.
\end{itemize}
The reciprocal (or dual) channel is the the channel formed when the transmitters and receivers of the primal channel are interchanged and the channel gains remain the same. Therefore, the dual of an $M \times N$ $X$ channel is a $N \times M$ $X$ channel. The channel gain between transmitter $i$ and receiver $j$ in the primal channel is equal to the channel gain between transmitter $j$ and receiver $i$ in the dual channel. It can be shown that corresponding to every zero-forcing based achievable scheme in the primal network, there exists a zero-forcing based achievable scheme in the reciprocal network that achieves the same number of degrees of freedom as the primal network. In particular, the coding scheme that achieves this in the dual network may be described as follows.
\begin{itemize}
\item Encoding - In the dual network, transmitter $j$ encodes a message to receiver $i$ along linearly independent streams and beamforms these streams along directions that were used for zero-forcing in the primal network. For example, the $k$th stream to receiver $i$ is encoded as $\bar{x}^{[ij]}_{k}$ and beamformed along direction $ \xu^{[ij]}_k$  as $x^{[ij]}_k \xu^{[ij]}_k$, ( where $\xu^{[ij]}_k$ represents the zero-forcing vector used in the primal network by receiver $j$ to decode the $k$th stream from transmitter $i$ )
\item Decoding - Receiver $i$ decodes all the desired streams through zero-forcing along directions that were the beamforming directions in the primal network. For example, in the dual network, to decode the $k$th stream from transmitter $j$ i.e $\bar{x}^{[ij]}_k$, receiver $i$ projects the received vector along vector $\xv^{[ij]}_k$ - where $\xv^{[ij]}_k$ represents the beamforming vector used in the primal network by transmitter $i$ to transmit the $k$th stream to receiver $j$ 
\end{itemize}
It can be easily verified that the above scheme maps every independent stream in the primal $M \times N$ $X$ channel to an independent decodable stream in the dual $N \times M$ $X$ channel and thus achieves the same number of degrees of freedom in the dual network. This scheme therefore establishes a duality of beamforming and zero-forcing based schemes in the general $X$ network. Note that in order to construct the zero-forcing vectors on the primal network, each transmitter in the dual network needs apriori knowledge of all the beamforming and channel vectors, and therefore violates the causality constraint in time-varying channels (i.e., if $\kappa$ represents the time-index). More fundamentally, causality only affects a transmitter in a communication network, since a receiver is ``willing to wait'' arbitrarily long before decoding a codeword. Therefore, the zero forcing scheme at the receivers of the primal network are not constrained by causality. Notice that in our coding scheme for the dual network, the encoding strategy at the transmitters in this dual network relies on the knowledge of the decoding strategy in the corresponding primal network and therefore, switching transmitters and receivers in the primal network to form the dual network affects causality constraints.
Below, we provide a formal proof of reciprocity for the $M=2$ case. The proof serves as an achievable scheme in the $2 \times K$ $X$ network with frequency selective channels.

\begin{proof}
Consider the $M+1$ symbol extension of the $M \times 2$ user $X$ network. The reciprocal network of this extended $X$ channel is a $2 \times M$ $X$ channel which can be expressed as 
$$\barxY^{[m]} = \displaystyle\sum_{l \in \{1,2 \ldots N\}} \barxH^{[ml]} \barxX^{[l]} + \barxZ^{[m]}, m=1,2 \ldots M$$
where the over-bar notation indicates quantities in the reciprocal channel. The vectors $\barxX^{[i}, \barxY^{[i]}, \barxZ^{[i]}$ are $(M+1) \times 1$ vectors and $\barxH^{[ij]}$ is $(M+1) \times (M+1)$ matrix. Note that in the reciprocal channel, the channel gains are identical i.e
$$\barxH^{[ij]} = \xH^{[ji]}, i=1,2 \ldots M, j=1,2$$

In the achievable scheme for the $M \times 2$ $X$ channel described in Section \ref{subsec:achKx2}, transmitter $i$ encodes message $W^{[ji]}$ as $x^{[ji]}$ and beamforms it along direction $\xv^{[ji]}$ so that the received message at receiver $j$ is 
$$\xY^{[j]} = \displaystyle\sum_{i=1}^{K}  x^{[1i]}\xH^{[ji]}  \mathbf{v}^{[1i]} + \displaystyle\sum_{i=1}^{K} x^{[2i]}\xH^{[2i]}  \mathbf{v}^{[2i]} +  \xZ^{[j]}$$
Receiver $j$ decodes $x^{[ji]}$ using zero-forcing. Let $\xu^{[ji]}$ represent the zero-forcing vector used by receiver $j$ to decode $x^{[ji]}$. Therefore
\begin{equation}\label{eqn:primalzeroforce} (\xu^{[ji]})^T \xH^{[jm ]} \xv^{[lm]} \neq 0 \Leftrightarrow l=j,m=i \end{equation}
$$ \Rightarrow (\xu^{[ji]})^T \xY^{[j]} = x^{[ji]} (\xu^{[ji]})^T \xH^{[ji]} \xv^{[ji]} + (\xu^{[ji]})^T \xZ^{[j]}$$

We now construct beamforming directions $\barxv^{[ij]}$ and zero-forcing vectors $\barxu^{[ij]}$ in the dual channel so that receiver $i$ can decode message $\bar{W}^{[ij]}$ by zero-forcing interference from all other vectors. 
In the dual channel, let transmitter $j$ encode message $\bar{W}^{[ij]}$ to receiver $i$ as $\bar{x}^{[ij]}$ where $j\in \{1,2\}, i \in \{1,2,\ldots M\}$. The beamforming directions of the dual channel are chosen to be the zero-forcing vectors in the primal channel i.e. $\barxv^{[ij]} = \xu^{[ji]}, j \in \{1,2\}, i \in \{1,2,\ldots M\}$. The transmitted message is therefore
$$\barxX^{[j]} = \displaystyle\sum_{m=1}^{M} \bar{x}^{[mj]} \barxv^{[mj]} = \displaystyle\sum_{m=1}^{M} \bar{x}^{[mj]} \xu^{[jm]}, j=1,2  $$

The received vector at receiver $i$ is 
$$\barxY^{[i]} = \displaystyle\sum_{m=1}^{M} \left( \bar{x}^{[m1]}\barxH^{[i1]}  \barxv^{[m1]} + \bar{x}^{[m2]}\barxH^{[i2]}  \barxv^{[m2]}\right)  + \barxZ^{[i]}$$ 
$$\barxY^{[i]} = \displaystyle\sum_{m=1}^{M} \left( \bar{x}^{[m1]}\xH^{[1i]}  \xu^{[1m]} + \bar{x}^{[m2]}\xH^{[2i]}  \xu^{[2m]}\right)  + \barxZ^{[i]}$$ 
Now, at receiver $i$, stream $\bar{x}^{[ij]}$ is decoded by projecting the received vector along $\xv^{[ji]}$
$$ (\xv^{[ji]})^T\barxY^{[i]} = \displaystyle\sum_{m=1}^{M} (\xv^{[ji]})^T \left( \bar{x}^{[m1]}\xH^{[1i]}  \xu^{[1m]} + \bar{x}^{[m2]}\xH^{[2i]}  \xu^{[2m]}\right)  + (\xv^{[ji]})^T \barxZ^{[i]}$$ 
$$ (\xv^{[ji]})^T\barxY^{[i]} = \displaystyle\sum_{m=1}^{M}  \left( \bar{x}^{[m1]} (\xv^{[ji]})^T\xH^{[1i]}  \xu^{[1m]} + \bar{x}^{[m2]} (\xv^{[ji]})^T\xH^{[2i]}  \xu^{[2m]}\right)  + (\xv^{[ji]})^T \barxZ^{[i]}$$ 
$$ (\xv^{[ji]})^T\barxY^{[i]} = \displaystyle\sum_{m=1}^{M} \left( \bar{x}^{[m1]} (\xH^{[1i]} \xu^{[1m]})^T \xv^{[ji]}  + \bar{x}^{[m2]}  (\xH^{[2i]} \xu^{[2m]})^T \xv^{[ji]}\right)  + (\xv^{[ji]})^T \barxZ^{[i]}$$ 
$$ (\xv^{[ji]})^T\barxY^{[i]} = \displaystyle\sum_{m=1}^{M} \left( \bar{x}^{[m1]} (\xu^{[1m]})^T \xH^{[1i]} \xv^{[ji]} + \bar{x}^{[m2]}  (\xu^{[2m]})^T \xH^{[2i]} \xv^{[ji]} \right)  + (\xv^{[ji]})^T \barxZ^{[i]}, j=1,2, i=1,2, \ldots,M$$ 
Using equation (\ref{eqn:primalzeroforce}) above, we get
$$ (\xv^{[ji]})^T\barxY^{[i]}  = \bar{x}^{[ij]} (\xu^{[ji]})^T \xH^{[ji]} \xv^{[ji]} + (\xv^{[ji]})^T \barxZ^{[i]}$$
The zero-forcing vector $\xv^{[ji]}$ cancels all interference. Thus,  $1$ degree of freedom is achieved for message $W^{[ij]}$, for each $i \in \{1,2 \ldots M\}, j \in \{1,2\}$. Thus $2M$ degrees of freedom are achieved over the extended $2 \times M$ user $X$ channel implying that $\frac{2M}{M+1}$ degrees of freedom are achieved in the original $2 \times M$ $X$ channel.
The reciprocal scheme also implies that the $O(1)$ capacity of the $2 \times M$ user $X$ network is
$$ C(\rho) = \frac{2M}{M+1} \log(\rho) + \mathcal{O}(1)$$
where $C(\rho)$ is the sum-capacity of the network as a function of transmit power $\rho$
\end{proof}
\subsection{Partial Interference Alignment for General $M\times N$ user $X$ Networks}\label{sec:partialalign}
In this section we provide a general interference alignment scheme that applies to the $M\times N$ user $X$ network for any $M,N$. Unlike the special cases of $M=2$ and $N=2$ considered in the previous section where the degree of freedom outerbound is exactly achieved, the general scheme is a partial interference alignment scheme and can only come arbitrarily close the the degrees of freedom outerbound by using long channel extensions.

We start with an interpretation of the outerbound in terms of interference alignment - this interpretation provides an intuition of the proof for the general case. For the rest of the discussion in this section, we restrict our achievable scheme to the class of schemes that use beamforming and zero-forcing over extended channels. 

Consider an $X$ network with $M$ transmitters and $N$ receivers. If at all possible, consider a hypothetical coding scheme based on interference alignment and zero-forcing, that achieves the outerbound of Theorem \ref{thm:outerbound}. Note that $d_{ij} = \frac{1}{M+N-1}, \forall i\in\{1,2 \ldots M\},j \in \{1,2 \ldots N\}$ is the only point that satisfies all the outerbounds and achieves a total of $\frac{MN}{M+N-1}$ degrees of freedom. Therefore all messages achieve equal number of degrees of freedom in this coding scheme. Assuming that the achievable scheme uses $(M+N-1)k$ symbol extension of the channel, the scheme achieves $k$ degrees of freedom for each message over this extended channel. Now, transmitter $i$ encodes message $W^{[ji]}$ for message $j$ as $k$ independent streams. The transmitted symbol is a superposition of all these streams. 

Consider the signal received at receiver $1$. The received signal consists of $Nk$ linearly independent vectors from each transmitter. For example, consider the signal received from transmitter $1$. Among the $Nk$ linearly independent streams received from transmitter $1$, the desired signal forms $k$ streams so that the dimension of the interference from transmitter $1$ is equal to $(N-1)k$. Now, note that receiver $1$ has to decode $Mk$ streams, ( $k$ streams corresponding to each transmitter ) from the from the $(M+N-1)k$ components of the received symbol; therefore, the dimension of the interference has to be not more than $(N-1)k$. Since, the receiver already receives $(N-1)k$ linearly independent interference vectors from transmitter $1$, all the interference vectors from all the other transmitters have to align perfectly with the interference from transmitter $1$. This argument thus leads to the following insight : the outer-bound provided in Theorem \ref{thm:outerbound} corresponds to the scenario where, at any receiver, all the interference from transmitters $2,3, \ldots M$ perfectly aligns with the interference from transmitter $1$. The achievable scheme of Theorem \ref{thm:achievability} is based on precisely this insight; the construction of Lemma \ref{lemma:intalign} (stated below) is used to design transmit beamforming directions over arbitrarily long channel extensions, so that at each receiver, all interference from transmitters $2,3 \ldots M$ aligns  within the interference from transmitter $1$ to achieve partial interference alignment in the $X$ network.

In order to show achievability for general $M,N$, we use the following two lemmas.
\begin{lemma}
\label{lemma:nonsingular}
Consider an $M \times M$ square matrix $\mathbf{A}$ such that $\forall i\in\{1,2,\cdots,M\}$, every element $a_{ij}, j\in\{1,2,\cdots,M\}$ in the $i^{th}$ row of $\mathbf{A}$  is a \emph{different}\footnote{We refer to two monomial terms as \emph{different} if there exists atleast one variable with different exponents in the two monomial terms} multivariate monomial term in the variables $X_{ik},  k\in\{1,2,\cdots,K\}$. If $X_{ik}$ has a continuous probability distribution conditioned on all $X_{pq}, \forall (p,q) \neq (i,k)\}$ then the matrix has a full rank of $M$ with probability $1$.
\end{lemma}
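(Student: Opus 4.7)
My plan is to prove this via the Leibniz expansion of the determinant, showing that it is a nonzero polynomial in the underlying variables $X_{ik}$, and then invoking the fact that a nontrivial polynomial vanishes with probability zero when its variables have continuous distributions.

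First, I would write
\[
\det(\mathbf{A}) = \sum_{\sigma \in S_M} \mathrm{sgn}(\sigma) \prod_{i=1}^{M} a_{i,\sigma(i)},
\]
where each $a_{i,\sigma(i)}$ is, by hypothesis, a monomial in the variables $\{X_{ik} : k\in\{1,2,\ldots,K\}\}$. Since the variable index set $\{X_{ik}\}_k$ depends on the row index $i$, the variables used in different rows are disjoint. Consequently, for a fixed permutation $\sigma$, the product $\prod_i a_{i,\sigma(i)}$ is itself a multivariate monomial in the entire variable set $\{X_{ik}\}_{i,k}$, whose exponent vector is obtained by concatenating the exponent vectors of $a_{i,\sigma(i)}$ across the rows.

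The crux of the argument is to establish that two distinct permutations $\sigma \ne \sigma'$ yield two distinct monomials in the concatenated variable set, so that no cancellation can occur in the Leibniz sum. If $\sigma \ne \sigma'$, then there exists some row $i$ with $\sigma(i)\ne \sigma'(i)$; by the hypothesis that all entries in row $i$ are \emph{different} monomials in $\{X_{ik}\}_k$, the factors $a_{i,\sigma(i)}$ and $a_{i,\sigma'(i)}$ differ in the exponent of at least one $X_{ik}$. Because the $\{X_{ik}\}_k$ appear only in row $i$'s contribution (disjointness across rows), this discrepancy persists in the overall exponent vectors, so the two permutation products are genuinely different monomials. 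Hence $\det(\mathbf{A})$, viewed as a polynomial in the $\{X_{ik}\}$, has $M!$ distinct monomial terms, each with nonzero coefficient $\pm 1$, and in particular is not the zero polynomial.

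Finally, I would conclude using a standard induction-on-variables argument: a nonzero polynomial in finitely many variables, each having a continuous distribution conditioned on all the others, vanishes only on a set of measure zero. Concretely, enumerate the variables as $X_1,\ldots,X_T$ and note that $\det(\mathbf{A})$, regarded as a polynomial in $X_T$ with coefficients that are polynomials in the remaining variables, is nonzero as a polynomial; therefore its leading coefficient (in $X_T$) is a nonzero polynomial in $X_1,\ldots,X_{T-1}$. By the inductive hypothesis this leading coefficient is almost surely nonzero, and conditioned on it being nonzero, $\det(\mathbf{A})$ is a nontrivial univariate polynomial in $X_T$, which has finitely many roots; by continuity of the conditional distribution of $X_T$, the probability that $X_T$ hits one of these roots is zero. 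The main subtlety, and the only place one must be a bit careful, is in the combinatorial step showing distinctness of the monomials across permutations; once that is in hand, the measure-theoretic conclusion is routine.
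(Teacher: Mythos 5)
Your proof is correct, but it takes a genuinely different route from the paper's. You expand the determinant via the Leibniz formula and observe that, because the variables $\{X_{ik}\}_{k}$ of row $i$ appear in no other row and the entries within each row are pairwise \emph{different} monomials, the $M!$ permutation products are pairwise distinct monomials of the full variable set; hence no cancellation occurs, $\det(\mathbf{A})$ is a nonzero polynomial, and a standard induction-on-variables (Schwartz--Zippel-type) argument finishes the job. The paper instead uses a Laplace cofactor expansion along the first row: conditioned on the variables of rows $2,\dots,M$, the determinant is a polynomial in $\{X_{1k}\}_{k}$ whose distinct monomials are the $a_{1j}$ with coefficients $C_{1j}$, so it can be the zero polynomial only if every cofactor vanishes; the argument then recurses on the minors down to a single entry, interleaving the probabilistic conditioning with the recursion at each level. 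Both proofs rest on the same two structural facts (distinct monomials within a row, disjoint variable sets across rows), but yours cleanly separates the algebraic content from the measure-theoretic content and yields the slightly stronger structural statement that $\det(\mathbf{A})$ has exactly $M!$ distinct monomial terms, whereas the paper's never has to reason about all permutations simultaneously and only ever views the determinant as a polynomial in one row's variables at a time. One small caveat: you assert the coefficients are $\pm 1$, which presumes the monomial entries are monic; if the ``monomial terms'' carry nonzero scalar coefficients the argument is unchanged, since distinct monomials still cannot cancel.
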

The lemma is used to show the linear independence of the desired signals and the vector subspace that contains all the interference. The columns of the matrix $A$ represent desired signal vectors and the vectors that span the interference subspace. Therefore, if $A$ is full rank then its columns are linearly independent, which means the signals and the interference are separable. The received signal vectors as well as the interference vectors are functions of the channel coefficients such that each row of $A$ contains the channel coefficient values from a different time slot.
\begin{proof}
Refer to Appendix \ref{app:nonsingular}
\end{proof}

\begin{lemma}
\label{lemma:intalign}
Let $\xT_1,\xT_2, \ldots \xT_N$ be diagonal matrices of size $\mu \times \mu$, consisting of random variables on the diagonals such that
any diagonal element $\xT_i[j,j]$ has a continuous probability distribution, conditioned on all other elements $\xT_{l}[k,k], \forall (l,k)\neq (i,j)$. 
Also, let $\xw$ be a column vector whose elements are generated i.i.d. from a continuous distribution.
Then, for any $n \in \mathbb{N}$ satisfying $\mu > (n+1)^N$, we can construct, with probability $1$, full rank matrices $\xV$ and $\xV^{'}$ of sizes $\mu \times n^N$ and $\mu \times (n+1)^N$ respectively, such that the following relations are satisfied.
\begin{eqnarray*} 
\xT_1 \xV &\prec& \xV^{'}   \\
\xT_2 \xV &\prec& \xV^{'}   \\
	& \vdots & \\
\xT_N \xV &\prec& \xV^{'}   
\end{eqnarray*}
where $\mathbf{P} \prec \mathbf{Q}$ implies that the span of the column vectors of $\mathbf{P}$ lies in the vector space spanned by the column vectors of $\mathbf{Q}$. Furthermore, the above conditions can be satisfied with every entry in the $k$th row of $\xV$ ( and $\xV^{'}$ ) being a multi-variate monomial function of entries in the $k$th rows of $\xw$ and $\xT_i,i=1,2 \ldots N$. 

\end{lemma}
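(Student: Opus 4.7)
The plan is to construct $\mathbf{V}$ and $\mathbf{V}'$ explicitly as monomial-indexed matrices and then verify containment and rank separately. Concretely, let the columns of $\mathbf{V}$ be indexed by exponent vectors $\boldsymbol{\alpha}=(\alpha_1,\ldots,\alpha_N)$ with $\alpha_i\in\{0,1,\ldots,n-1\}$, and take the column associated with $\boldsymbol{\alpha}$ to be
\[
\mathbf{v}_{\boldsymbol{\alpha}} \;=\; \Big(\prod_{i=1}^{N}\mathbf{T}_i^{\alpha_i}\Big)\mathbf{w}.
\]
This gives $n^N$ columns. Similarly let the columns of $\mathbf{V}'$ be indexed by $\boldsymbol{\beta}=(\beta_1,\ldots,\beta_N)$ with $\beta_i\in\{0,1,\ldots,n\}$, giving $(n+1)^N$ columns
\[
\mathbf{v}'_{\boldsymbol{\beta}} \;=\; \Big(\prod_{i=1}^{N}\mathbf{T}_i^{\beta_i}\Big)\mathbf{w}.
\]
Since the $\mathbf{T}_i$ are diagonal they commute, so these products are well-defined, and the construction is manifestly row-wise monomial: the $k$th entry of $\mathbf{v}_{\boldsymbol{\alpha}}$ (or $\mathbf{v}'_{\boldsymbol{\beta}}$) depends only on the $k$th diagonal entries of the $\mathbf{T}_i$ and on $w_k$, giving the last sentence of the lemma for free.

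Next I would check the alignment relations. For any $j\in\{1,\ldots,N\}$ and any exponent vector $\boldsymbol{\alpha}$ indexing a column of $\mathbf{V}$,
\[
\mathbf{T}_j\mathbf{v}_{\boldsymbol{\alpha}} \;=\; \mathbf{T}_j^{\alpha_j+1}\prod_{i\neq j}\mathbf{T}_i^{\alpha_i}\,\mathbf{w}.
\]
Because $\alpha_j+1\le n$ and $\alpha_i\le n-1\le n$ for $i\neq j$, this vector is exactly one of the columns of $\mathbf{V}'$. Hence $\mathbf{T}_j\mathbf{V}\prec\mathbf{V}'$ for every $j$, as required, and this holds deterministically from the combinatorics of the exponent sets.

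The step that actually needs work is the full-rank claim, and this is where Lemma \ref{lemma:nonsingular} is invoked. Since $\mu\ge(n+1)^N$, it suffices to exhibit a $(n+1)^N\times(n+1)^N$ submatrix of $\mathbf{V}'$ that is nonsingular almost surely; the corresponding truncation of $\mathbf{V}$ then has full column rank $n^N$ because $\mathbf{V}$ is a column-submatrix of $\mathbf{V}'$ (every $\boldsymbol{\alpha}$ is also a valid $\boldsymbol{\beta}$). Select the first $(n+1)^N$ rows. The $(k,\boldsymbol{\beta})$ entry of the resulting square matrix is the monomial
\[
w_k\,\prod_{i=1}^{N} T_i[k,k]^{\beta_i},
\]
and the key observation is that as $\boldsymbol{\beta}$ ranges over $\{0,\ldots,n\}^N$ these monomials are pairwise different (they have distinct exponent tuples in $T_1[k,k],\ldots,T_N[k,k]$), so the hypothesis of Lemma \ref{lemma:nonsingular} is satisfied on each row $k$. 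Since the $T_i[k,k]$ and $w_k$ are drawn from continuous conditional distributions, that lemma yields full rank with probability one. Extracting any $n^N$ columns of the resulting rank-$(n+1)^N$ matrix produces the desired $\mathbf{V}$, also full column rank.

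The main obstacle I anticipate is just the bookkeeping to verify that the row-wise monomials across different $\boldsymbol{\beta}$ are truly distinct and that the invocation of Lemma \ref{lemma:nonsingular} is legitimate (the lemma was stated for square matrices); this is handled by restricting to the first $(n+1)^N$ rows of $\mathbf{V}'$ as above. Everything else — commuting diagonal factors, the containment $\mathbf{T}_j\mathbf{V}\prec\mathbf{V}'$, and the row-wise monomial form — is combinatorial and requires no calculation beyond the exponent-vector argument.
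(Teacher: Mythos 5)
Your proposal is correct and follows essentially the same route as the paper: the same monomial construction $\mathbf{v}_{\boldsymbol{\alpha}}=\big(\prod_{i}\mathbf{T}_i^{\alpha_i}\big)\mathbf{w}$ over a hypercube of exponent vectors (the paper indexes by $\alpha_i\in\{1,\ldots,n\}$ and $\{1,\ldots,n+1\}$, a harmless shift of your $\{0,\ldots,n-1\}$ and $\{0,\ldots,n\}$), the same deterministic containment argument via incrementing one exponent, and the same reduction of the full-rank claim to Lemma \ref{lemma:nonsingular}. The only difference is how you make that lemma (stated for square matrices) applicable: you restrict $\mathbf{V}'$ to its first $(n+1)^N$ rows, whereas the paper augments $\mathbf{V}'$ on the right with a $\mu\times(\mu-(n+1)^N)$ matrix of i.i.d.\ continuous entries to form a $\mu\times\mu$ matrix; both devices are valid and yield the same conclusion.
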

\begin{proof}
Refer to Appendix \ref{app:intalign}
\end{proof}

Note that Lemma \ref{lemma:intalign} provides a general approach to interference alignment. In particular, it generalizes the achievable scheme used in the $K$ user interference channel \cite{Cadambe_Jafar_int}. To see this, suppose vector spaces $\xV$ and $\xV^{'}$ transmitted by two distributed transmitters are perceived as interference at $N$ receivers. Then, the relations satisfied by $\xV$ and $\xV^{'}$ can be interpreted as the vector space $\xV$ aligning itself with vector space $\xV^{'}$ at these $N$ receivers. In fact, notice that for large values of $n$, $ \mbox{rank}(\xV) = n^N \approx (n+1)^N = \mbox{rank}(\xV^{'})$, meaning that the vector space $\xV$ aligns (asymptotically) perfectly with $\xV^{'}$ as $n \to \infty$. Also, a critical property used in the lemma is the commutative property of multiplication of diagonal matrices $\xT_j$. It turns out that the construction of the lemma does not extend directly to MIMO scenarios, since multiplication is not commutative if the matrices are not diagonal. The achievability proof for the total degrees of freedom of the general $M\times N$ user $X$ channel is placed in Appendix \ref{app:proof_dofx}.

\section{Degrees of Freedom of the Parallel Relay Network}
\label{sec:parallelrelay}
\begin{figure}[!tbp]
\begin{center}\setlength{\unitlength}{0.00045833in}
\begingroup\makeatletter\ifx\SetFigFont\undefined%
\gdef\SetFigFont#1#2#3#4#5{%
  \reset@font\fontsize{#1}{#2pt}%
  \fontfamily{#3}\fontseries{#4}\fontshape{#5}%
  \selectfont}%
\fi\endgroup%
{\renewcommand{\dashlinestretch}{30}
\begin{picture}(8337,5636)(0,-10)
\put(4387,3288){\ellipse{524}{4650}}
\path(600,4338)(975,4338)
\path(855.000,4308.000)(975.000,4338.000)(855.000,4368.000)
\path(1575,4338)(4350,5238)
\path(1650,2913)(4350,5238)
\path(4350,4863)(4425,4863)(4425,4788)
	(4350,4788)(4350,4863)
\path(4350,5313)(4425,5313)(4425,5238)
	(4350,5238)(4350,5313)
\path(4350,1413)(4425,1413)(4425,1338)
	(4350,1338)(4350,1413)
\path(1575,4338)(4350,4788)
\path(1650,2913)(4350,1338)
\path(1650,2913)(4350,4788)
\path(4350,1938)(4425,1938)(4425,1863)
	(4350,1863)(4350,1938)
\path(1650,2913)(4350,1863)
\path(1575,4338)(4350,1863)
\path(1575,4338)(4350,1338)
\path(1650,2913)(4425,2913)
\path(4425,2913)(7200,2913)
\path(1575,4338)(4350,4338)
\path(4425,4338)(7200,4338)
\path(4425,5238)(7200,4338)
\path(4425,4788)(7200,4338)
\path(4425,1863)(7200,4338)
\path(4425,1338)(7200,4338)
\path(4425,4788)(7200,2913)
\path(4425,5238)(7200,2913)
\path(4425,1863)(7200,2913)
\path(4425,1338)(7200,2913)
\path(675,2913)(1050,2913)
\path(930.000,2883.000)(1050.000,2913.000)(930.000,2943.000)
\path(5400,288)(4500,1038)
\path(4611.392,984.225)(4500.000,1038.000)(4572.981,938.131)
\path(7950,4338)(8325,4338)
\path(8205.000,4308.000)(8325.000,4338.000)(8205.000,4368.000)
\path(7875,2913)(8250,2913)
\path(8130.000,2883.000)(8250.000,2913.000)(8130.000,2943.000)
\thicklines
\dottedline{165}(4425,4713)(4425,2013)
\put(975,4263){\makebox(0,0)[lb]{{\SetFigFont{7}{8.4}{\rmdefault}{\mddefault}{\updefault}$X^{[1]}$}}}
\put(1125,2838){\makebox(0,0)[lb]{{\SetFigFont{7}{8.4}{\rmdefault}{\mddefault}{\updefault}$X^{[2]}$}}}
\put(75,2838){\makebox(0,0)[lb]{{\SetFigFont{7}{8.4}{\rmdefault}{\mddefault}{\updefault}$W^{[22]}$}}}
\put(7350,2838){\makebox(0,0)[lb]{{\SetFigFont{7}{8.4}{\rmdefault}{\mddefault}{\updefault}$Y^{[2]}$}}}
\put(7350,4263){\makebox(0,0)[lb]{{\SetFigFont{7}{8.4}{\rmdefault}{\mddefault}{\updefault}$Y^{[1]}$}}}
\put(4725,63){\makebox(0,0)[lb]{{\SetFigFont{8}{9.6}{\familydefault}{\mddefault}{\updefault}$K$ distributed relays}}}
\put(8250,2838){\makebox(0,0)[lb]{{\SetFigFont{7}{8.4}{\rmdefault}{\mddefault}{\updefault}$\widehat{W}^{[22]}$}}}
\put(8325,4263){\makebox(0,0)[lb]{{\SetFigFont{7}{8.4}{\rmdefault}{\mddefault}{\updefault}$\widehat{W}^{[11]}$}}}
\put(0,4263){\makebox(0,0)[lb]{{\SetFigFont{7}{8.4}{\rmdefault}{\mddefault}{\updefault}$W^{[11]}$}}}
\end{picture}
}\end{center}
\caption{The parallel relay network with $M=2$}
\label{fig:parallelrelay}
\end{figure}

In this section, we present an application of the results of the previous section. We provide an alternate proof of the degrees of freedom characterization of the parallel relay network (Figure \ref{fig:parallelrelay}), first studied in \cite{Boelcskei_Nabar_Oyman_Paulraj}. 

Consider a two hop parallel relay network with $M$ distributed single antenna transmitters and $M$ distributed single antenna receivers. We assume that the intermediate hop has $K$ half-duplex relays. 


Much like the $M$ user interference channel, transmitter $j$ has a single message $W^{[j]}$ to transmit to receiver $j$, where $j=1,2,\ldots M$ and thus there are a total of $M$ messages in this channel. Through an achievable scheme based on amplify-and-forward strategy at the relays, \cite{Boelcskei_Nabar_Oyman_Paulraj} shows that this network has $M/2$ degrees of freedom if the number of relays $K$ grow arbitrarily large. We use the degrees of freedom characterization of $X$ channel to provide an alternate optimal achievable scheme to show the same result (Figure \ref{fig:parallelrelay}) by treating this network as a combination of two $X$ networks. Notice that the interpretation of the parallel relay network as a combination of two $X$ networks results is restrictive, since it implies that the relays are forced to decode, and hence restricts coding strategies to decode-and-forward based achievable schemes. However, in time-varying (or frequency-selective) channels, this scheme achieves the same degrees of freedom characterization as the amplify-and-forward based scheme in \cite{Boelcskei_Nabar_Oyman_Paulraj}.

\begin{theorem}
\label{thm:parallelrelay}
$\frac{MK}{2(M+K-1)}$ degrees of freedom are achievable in the two-hop parallel relay network with $M$ distributed transmitters and receivers with $K$ distributed relays. If $K \to \infty$, this parallel relay network has $M/2$ degree of freedom.
\end{theorem}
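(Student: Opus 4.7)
The plan is to exploit the decomposition suggested in the prose immediately preceding the theorem: view the two-hop network as the concatenation of an $M \times K$ user $X$ network (transmitters to relays) followed by a $K \times M$ user $X$ network (relays to receivers), and then invoke Theorem \ref{thm:achievability} on each hop. First I would split each source message $W^{[j]}$ into $K$ independent sub-messages $W^{[j,k]}$, $k = 1, 2, \ldots, K$, interpreting $W^{[j,k]}$ as a virtual message from transmitter $j$ to relay $k$. This makes the first hop a genuine $M \times K$ user $X$ network with $MK$ independent messages, so Theorem \ref{thm:achievability} guarantees $\frac{MK}{M+K-1}$ degrees of freedom, achievable through the partial interference alignment construction of Section \ref{sec:partialalign}. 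Each relay $k$ decodes all its $M$ incoming sub-messages $\{W^{[j,k]}\}_{j=1}^M$, then re-encodes them as $M$ virtual messages destined for the $M$ receivers.

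Next I would apply the same result to the second hop, now viewed as a $K \times M$ user $X$ network whose $KM$ messages are precisely the decoded fragments $\{W^{[j,k]}\}$ being routed to receiver $j$. Theorem \ref{thm:achievability} again supplies $\frac{KM}{K+M-1}$ degrees of freedom, and each receiver $j$ reconstructs $W^{[j]}$ by reassembling $\{W^{[j,k]}\}_{k=1}^K$. To reconcile the two hops with the half-duplex constraint, I would time-share: dedicate a fraction $1/2$ of the time/frequency dimensions to hop one and $1/2$ to hop two, running the respective alignment scheme of Section \ref{sec:partialalign} on each. Since both hops achieve the identical quantity $\frac{MK}{M+K-1}$ degrees of freedom per dimension when active, the combined scheme delivers $\frac{1}{2}\cdot\frac{MK}{M+K-1} = \frac{MK}{2(M+K-1)}$ degrees of freedom per overall channel use. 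Taking $K \to \infty$ gives $\lim_{K\to\infty}\frac{MK}{2(M+K-1)} = \frac{M}{2}$, yielding the second claim.

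The main obstacle I expect is not the end-to-end rate arithmetic but the \emph{dimensional compatibility} of the two alignment schemes across the hop boundary. The partial alignment scheme from Section \ref{sec:partialalign} operates over a specific supersymbol length $\mu$ tailored to $M$ and $K$ and an alignment exponent $n$, and approaches the outerbound only as $n \to \infty$. Stitching the two hops together therefore requires choosing a common block length that accommodates both the first-hop encoding-decoding cycle at the relays and the second-hop encoding at the relays based on the decoded bits, while respecting the half-duplex schedule; one must also verify that the decoding delay at the relays does not violate causality of the second-hop alignment construction (which only needs causal channel knowledge, per Lemma \ref{lemma:intalign}). A clean way to handle this is to let each hop operate on its own long supersymbol with vanishing alignment penalty $\epsilon$, and to observe that the decode-and-forward interface at the relays is lossless in the degrees-of-freedom sense, so the overall penalty remains $o(\log \rho)$. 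Once this bookkeeping is in place, the claimed $\frac{MK}{2(M+K-1)}$ inner bound and its $K \to \infty$ limit follow directly.
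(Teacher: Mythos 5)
Your proposal matches the paper's own argument: split each $W^{[j]}$ into $K$ sub-messages routed through the relays, treat the two hops as an $M\times K$ and a $K\times M$ user $X$ network, invoke Theorem \ref{thm:achievability} on each, and time-share the two phases to get $\frac{1}{2}\cdot\frac{MK}{M+K-1}$ degrees of freedom, with the $K\to\infty$ limit giving $M/2$. The paper only sketches this (it explicitly gives just an outline), so your additional care about supersymbol lengths, the decode-and-forward interface, and causality at the relays is a welcome elaboration rather than a deviation.
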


The proof follows from the degrees of freedom characterization of $X$ networks. For brevity's sake, we only provide an outline of the proof here. The message $W^{[j]}$ is split into $K$ independent sub-messages $W^{[j]}_k, k=1,2 \ldots K$ with message $W^{[j]}_k$ is meant to be decoded by relay $k$. In the first phase of duration $1$ time slot, the coding scheme corresponding to the $M \times K$ user $X$ network is employed so that $\frac{1}{M+K-1}$ bits corresponding to each sub-message is transmitted. For the second phase, notice that each relay has $\frac{1}{M+K-1}$ bits of information for each receiver - these bits are transmitted in a single time slot over the $K \times M$ $X$ network to the receivers. Since there is a total of $MK$ submessages in the system, a total of $\frac{MK}{M+K-1}$ bits are transmitted over the network 2 in time slots thus achieving $\frac{MK}{2(M+K-1)}$ degrees of freedom overall.

\section{Conclusion}
The $X$ network is arguably the most important single-hop network since it contains, within itself, most other one-way fully connected single hop networks. For instance, the 2 user MAC, BC and interference channels are all embedded in a two user $X$ channel, and therefore can be derived by setting appropriate messages to null.  We provide an outerbound for the degrees of freedom \emph{region} of the X channel with arbitrary number of single-antenna transmitters and receivers and no shared information among nodes. We also show that the \emph{total} number of degrees of freedom of the $M \times N$ $X$ network is equal to $\frac{MN}{M+N-1}$. The degrees of freedom region outerbound is very useful; it can be used to bound the number of degrees of freedom of most practical distributed single-hop wireless ad-hoc networks with single antenna at each node. 

It was observed that the $2$ user $X$ channel beats the $2$ user interference channel in performance with $4/3$ degrees of freedom. Recently, it has been shown that the $K$ user interference channel has $K/2$ degrees of freedom. The result of this paper implies that the $K$ user $X$ network is approximately equal to $K/2$ for large values of $K$. Therefore, for large values of $K$, $K$ user $X$ networks and $K$ user interference networks have approximately the same number of degrees of freedom.

In the $X$ network with $M$ transmitting and $N$ receiving nodes, if $M \gg N$ or if $N \gg M$, the total number of degrees of freedom is equivalent to the full cooperation MIMO outerbound of $\min(M,N)$ This is an optimistic result from the point of view of network information theory. It suggests that, from a degrees of freedo perspective, distributed single antenna nodes with no prior common information can behave as a single node with multiple antennas if they are transmitting to or receiving from a relatively large number of nodes. We provide an example of this scenario in the form of a parallel relay network. An interpretation of the two-hop parallel relay channel with $M$ transmitters and receivers and $K$ parallel relays as a composite of $2$ $X$ networks - one $M \times K$ and one $K \times M$ $X$ network - corresponding to the two hops leads to an alternate degrees of freedom optimal achievable scheme based on decode-and-forward. This example demonstrates the ubiquitousness of  $X$ networks in wireless communication networks. A study of $X$ networks with partial shared information among nodes can therefore potentially lead to understanding of several multi-hop networks.

The result of this work demonstrates the power of the technique of interference alignment combined with zero-forcing. The optimality of interference alignment in $X$ networks should motivate a closer look at interference alignment based schemes. For example, we note that the optimal achievable scheme uses arbitrarily long channel extensions in most cases. From a practical perspective, an important extension of this work is the study of the performance of achievable schemes restricted to fixed finite channel extensions. It has been observed that arbitrarily long channel extensions can be avoided without compromising optimality in the $3$ user interference channel if all the nodes are equipped with multiple antennas. A study of MIMO $X$ networks can potentially reveal more efficient schemes achieving optimality using shorter channel extensions. 

\appendices
\section{Proof of Lemma \ref{lemma:nonsingular}}
\label{app:nonsingular}
We need to show that $|\mathbf{A}|$, the determinant of $\mathbf{A}$ is non-zero with probability $1$.
Let $C_{ij}$ represents the co-factor corresponding to $a_{ij}$. Then, 
$$ |\mathbf{A}| = C_{11} a_{11} + C_{12} a_{12} \ldots C_{1N} a_{1N}$$
Notice that given all of $X_{jk}, j=2,3 \ldots M, k=1,2 \ldots K$, $|\mathbf{A}|$ is a polynomial in variables $X_{1k}, k=1,2, \ldots K$. Therefore, $|\mathbf{A}|=0$ with non-zero probability only if one of the following two conditions are satisfied.
\begin{enumerate}
\item The polynomial is the zero polynomial
\item $X_{1k}, k=1,2, \ldots K$ are roots of the polynomial formed by expanding the determinant
\end{enumerate}
If condition 1) is not satisfied, then the set of roots of the polynomial formed is a set of measure $0$. Notice that $C_{1l}$ is a polynomial in $X_{jk}, j=2,3, \ldots M, k=1,2 \ldots K$. Therefore, the variables $X_{1k}, k=1,2, \ldots , K$ have a continuous distribution given all of $C_{1l}$ and the probability that these variables take values from a set of measure $0$ is equal to $0$.  This clearly implies that the probability of condition $2$ being satisfied is equal to $0$. We now argue that the probability of condition $1$ being satisfied is also equal to $0$. 
Now, since each $a_{1j}$ is a \textit{unique} monomial term, condition $1$ is satisfied only if all the coefficients if $a_{1j}$ are zero, i.e., if $C_{1l}=0, l=1,2 \ldots M$. Therefore, 
$ \Pr(|\mathbf{A}| = 0) > 0 \Rightarrow \Pr(C_{1M}=0) > 0$
Note that $C_{1M}$ is formed by stripping the last row and last column of $\mathbf{A}$. Now, the same argument can be iteratively used, stripping the last row and last column at each stage, until we reach a single element matrix containing $a_{M1}$ i.e.
$$ \Pr(|\mathbf{A}| = 0) > 0 \Rightarrow \Pr(a_{M1} = 0) > 0$$
$a_{M1}$ is a product of terms of the form $ X_{Mk}, k=1,2, \ldots, K$ and therefore has a continuous probability distribution. We can hence conclude that $\Pr(a_{M1}=0) = 0$ and therefore $|\mathbf{A}|$ is non-zero almost surely. 
Thus, the column vectors of $\mathbf{A}$ are linearly independent with probability $1$.
\hfill\QED

\section{Proof of Lemma \ref{lemma:intalign}}
\label{app:intalign}
Let 
\begin{eqnarray*}
 \xV &=& \big\{ \big(\prod_{i=1,2 \ldots N} \xT_i^{\alpha_i}\big) \xw: (\alpha_1, \alpha_2, \ldots \alpha_N) \in \{1,2, \ldots n\} \big\} \\
 \xV^{'} &=& \big\{ \big(\prod_{i=1,2 \ldots N} \xT_i^{\alpha_i}\big) \xw : (\alpha_1, \alpha_2, \ldots \alpha_N) \in \{1,2, \ldots n+1\} \big\}
 \end{eqnarray*}
Note the slight abuse in notation in the above two equations - the right hand side represents the set of column vectors which may occur in any order to form the matrices on the corresponding left hand sides.
To clarify the notation, let $n=1$. Then the column vectors of $\xV$ is the column vector $\xT_1,\xT_2 \ldots \xT_N \xw$. $\xV^{'}$ contains vectors of the form $\xT_1^{\alpha_1} \xT_2^{\alpha_2} \ldots \xT_N^{\alpha_N} \xw$ for all $\alpha_i \in \{1,2\}$. 
It can be clearly seen that $\xV$ and $\xV^{'}$ satisfy 
$$\xT_i \xV \prec \xV^{'}, \forall i=1,2 \ldots N$$
Also, clearly, every entry in the $k^{th}$ row of $\xV$ and the $k^{th}$ row of $\xV^{'}$ is a monomial function in the variables of $T_{ji},j=1,2 \ldots N$ and $w_{i}$, where $T_{ji}$ is the diagonal entry in the $i$th row of $\xT_j$ and $w_{i}$ is a diagonal entry in the $i$th row of $\xw$.
All that needs to be shown is that matrices $\xV$ and $\xV^{'}$ have full rank almost surely. In other words, we need to show that their columns are linearly independent.
Notice that $\xV \prec \xV^{'}$. Therefore it is enough to show that the column vectors of $\xV^{'}$ are linearly independent with probability $1$. Now consider the matrix $\mathbf{\Lambda}$ as below
$$ \mathbf{\Lambda} = \left[ \mathbf{V}^{'} \mathbf{U} \right]$$
where $\mathbf{U}$ is a $\mu \times (\mu - (n+1)^N)$ matrix whose entries $u_{ij}$ are chosen i.i.d from a continuous distribution. Now, notice that the entries in row $i$ of $\mathbf{\Lambda}$ are unique monomial terms in variables $w_i, T_{j,i}, j=1,2 \ldots N, u_{ik}, k=1,2 \ldots (\mu-(n+1)^N)$. Now, we invoke Lemma \ref{lemma:nonsingular} to conclude that the column vectors of $\mathbf{\Lambda}$ are linearly independent with probability $1$. Since the set of column vectors of $\xV$ and those of $\xV^{'}$ are contained in $\mathbf{\Lambda}$, we can conclude that the column vectors of $\xV$ and $\xV^{'}$ are linearly independent with probability $1$.
\QED

\section{Degrees of freedom of the General $M\times N$ user $X$ Network}
\label{app:proof_dofx}
The achievability scheme generalizes the scheme used for the $K$ user interference channel in \cite{Cadambe_Jafar_int}.
Let $\Gamma=(M-1)(N-1)$. We will develop a coding scheme based on interference alignment which achieves a total of $\frac{N(n+1)^{\Gamma} + (M-1)Nn^{\Gamma}}{N(n+1)^{\Gamma}+(M-1)n^{\Gamma}}$ degrees of freedom for any arbitrary $n$. Taking supremum over all $n$ proves that the total number of degrees of freedom is equal to $\frac{MN}{M+N-1}$ as desired. To show this, we construct a scheme that achieves a total of $ (M-1)Nn^{\Gamma}+N(n+1)^{\Gamma}$ degrees of freedom over a $\mu_n=N(n+1)^{\Gamma}+(M-1)n^{\Gamma}$ symbol extension of the original channel. Over the extended channel, the scheme achieves $(n+1)^{\Gamma}$ degrees of freedom for each of the $N$ messages $W^{[j1]}, j=1,2 \ldots N$ and achieves $n^{\Gamma}$ degrees of freedom for all the other $(M-1)N$ messages $W^{[ji]}, j=1,2 \ldots N, i=2,3 \ldots N$.
The signal vector in the extended channel at the $j^{th}$ user's receiver can be expressed as 
$$ \xY^{[j]}(\kappa) = \displaystyle\sum_{i=1}^{M} \xH^{[ji]}(\kappa) \mathbf{X}^{[i]}(\kappa)$$ 
where $\mathbf{X}^{[i]}$ is a $\mu_n \times 1$ column vector representing the $\mu_n$ symbol extension of the transmitted symbol $X^{[i]}$, i.e 
$$\xX^{[i]}(\kappa) \define \left[ \begin{array}{c} X^{[i]}(\mu_n\kappa+1) \\X^{[i]}(\mu_n \kappa + 2)\\ \vdots \\ X^{[i]}(\mu_n (\kappa+1)) \end{array}\right]$$
Similarly $\xY^{[i]}$ and $\xZ^{[i]}$ represent $\mu_n$ symbol extensions of the $Y^{[i]}$ and $Z^{[i]}$ respectively. 
$\xH^{[ji]}$ is a diagonal $\mu_n\times \mu_n$ matrix representing the $\mu_n$ symbol extension of the channel i.e
$$ \xH^{[ji]}(\kappa) \define \left[ \begin{array}{cccc}  H^{[ji]}(\mu_n \kappa + 1) & 0 & \ldots & 0\\
	0 & H^{[ji]}(\mu_n \kappa+2) & \ldots & 0\\
	\vdots & \cdots & \ddots & \vdots\\ 
	0 & 0& \cdots  & H^{[ji]}(\mu_n (\kappa+1)) \end{array}\right] $$
Over the extended channel, message $W_{j1}$ is encoded at transmitter $1$ as $(n+1)^{\Gamma}$ independent streams $x^{[j1]}_m, m=1,2, \ldots (n+1)^{\Gamma}$ along directions $\mathbf{v}_m^{[j1]},m=1,2 \ldots (n+1)^{\Gamma}$. So the signal transmitted at transmitter $1$ may be written as 
$$\xX^{[1]}(t) = \displaystyle\sum_{j=1}^{N}\displaystyle\sum_{m=1}^{(n+1)^\Gamma} x^{[j1]}_m(t) \mathbf{v}_m^{[j1]}(t) = \displaystyle\sum_{j=1}^{N}\xV^{[j1]}(t) \mathbf{x}^{[j1]}(t)$$
Note that $\xV^{[j1]}$ is a $\mu_n \times (n+1)^\Gamma$ matrix and $\mathbf{x}^{[j1]}(t)$ is a $(n+1)^{\Gamma} \times 1$ vector.
Similarly $W_{ji}, i\neq 1$ is encoded into $n^{\Gamma}$ independent streams by transmitter $i$ as 
$$\xX^{[i]}(t) = \displaystyle\sum_{j=1}^{N}\displaystyle\sum_{m=1}^{n^\Gamma} x^{[ji]}_m(t) \mathbf{v}_m^{[ji]}(t) = \displaystyle\sum_{j=1}^{N}\xV^{[ji]}(t) \mathbf{x}^{[ji]}(t)$$

where $\xV^{[ji]}$ is a $\mu_n\times n^\Gamma$ matrix
The received signal at the $k^{th}$ receiver can then be written as
$$ \xY^{[k]}(t) = \displaystyle\sum_{i=1}^{M}\xH^{[ki]}(t) \big( \displaystyle\sum_{j=1}^{N}\xV^{[ji]}(t) \mathbf{x}^{[ji]}(t) \big) $$

We wish to design beamforming directions  $\xV^{[ki]}$ so that receiver $j$ can decode each of the desired signals by zeroforcing the interference. We ensure this by aligning interference so that the dimension of the space spanned by the interference vectors at any receiver is equal to $(N-1)(n+1)^\Gamma$. Once the interference is aligned in this fashion, a receiver can decode its desired $(n+1)^\Gamma+(M-1)n^\Gamma$ streams by zero-forcing the interference in the $\mu_n = (N-1)(n+1)^\Gamma+(n+1)^\Gamma+(M-1)n^\Gamma$ dimensional space. 
Interference alignment is ensured by choosing the beamforming directions $\xV^{[ki]}$ so that the following interference alignment equations are satisfied at receiver $j, \forall j=1,2 \ldots N$.
\begin{eqnarray} 
 \left.
	 \begin{array}{ccc}
\xH^{[ji]} \xV^{[1i]} &\prec& \xH^{[j1]} \xV^{[11]} \\ 
\xH^{[ji]} \xV^{[2i]} &\prec& \xH^{[j1]} \xV^{[21]} \\
& \vdots & \\
\xH^{[ji]} \xV^{[(j-1)i]} &\prec& \xH^{[j1]} \xV^{[(j-1)1]}\\
\xH^{[ji]} \xV^{[(j+1)i]} &\prec& \xH^{[j1]} \xV^{[(j+1)1]} \\
& \vdots & \\
\xH^{[ji]} \xV^{[Ni]} &\prec& \xH^{[j1]} \xV^{[N1]}
    \end{array} \right\} &  \forall i=2,3, \ldots M & 
	\label{eqn:subspace_1}
	\end{eqnarray} 
where $\mathbf{P} \prec \mathbf{Q}$, means that the set of column vectors of matrix $\mathbf{P}$ is a subset of the set of column vectors of matrix $\mathbf{Q}$.
In other words, we wish to pick matrices $\xV^{[ki]}$ so that, at receiver $j$, all the interfering spaces from transmitters $2, 3, \ldots M$ align themselves with the interference from transmitter $1$. Then, at any receiver, the dimension of \emph{all} the interfering streams is equal to the dimension of the interference from transmitter $1$ which is equal to $(N-1)(n+1)^{\Gamma}$ as required. Note that there are $(M-1)(N-1) = \Gamma$ relations above corresponding to receiver $j$. Therefore a total of $\Gamma N$ relations of the form $\mathbf{P} \prec \mathbf{Q}$ need to be satisfied. These relations can be expressed alternately as 

\begin{eqnarray} 
 \left.
	 \begin{array}{ccc}
\xH^{[1i]} \xV^{[ki]} &\prec& \xH^{[11]} \xV^{[k1]}  \\ 
\xH^{[2i]} \xV^{[ki]} &\prec& \xH^{[21]} \xV^{[k1]} \\ 
& \vdots &\\
\xH^{[(k-1)i]} \xV^{[ki]} &\prec& \xH^{[(k-1)1]} \xV^{[k1]} \\ 
\xH^{[(k+1)i]} \xV^{[ki]} &\prec& \xH^{[(k+1)1]} \xV^{[k1]}  \\ 
& \vdots &\\
\xH^{[Ni]} \xV^{[ki]} &\prec& \xH^{[N1]} \xV^{[k1]} \\ 
    \end{array}
		 \right\} &  \forall i=2,3, \ldots M,\forall k=1,2 \ldots N  & 
	\label{eqn:subspace_2}
\end{eqnarray}

In order to satisfy the above relations, we first choose 
\begin{equation*}  
\xV^{[k2]} = \xV^{[k3]} = \ldots \xV^{[kM]}, \forall k=1,2 \ldots N
\end{equation*}

Now, the relations in (\ref{eqn:subspace_2}) can be re-written as 
\begin{eqnarray}
 \left.
	 \begin{array}{ccc}
\xT^{[1i]} \xV^{[k2]} &\prec& \xV^{[k1]}   \\ 
\xT^{[2i]} \xV^{[k2]} &\prec&  \xV^{[k1]}  \\ 
\xT^{[3i]} \xV^{[k2]} &\prec&  \xV^{[k1]}  \\ 
& \vdots &\nonumber\\
\xT^{[(k-1)i]} \xV^{[k2]} &\prec& \xV^{[k1]} \\ 
\xT^{[(k+1)i]} \xV^{[k2]} &\prec& \xV^{[k1]} \\ 
& \vdots &\\
\xT^{[Ni]} \xV^{[k2]} &\prec& \xV^{[k1]} 
    \end{array} \right\} &  \forall i=2,3 \ldots M, k=1,2 \ldots N& \\
	\label{eqn:subspace_3}
\end{eqnarray}

where
\begin{eqnarray*}
 \xT^{[ji]} = (\xH^{[j1]})^{-1} \xH^{[ji]}, & j=1,2 \ldots N, i=2,3 \ldots M \nonumber & 
\end{eqnarray*}
We now wish to pick $\xV^{[k1]}$ and $\xV^{[k2]}$ so that the above relations are satisfied and then use $\xV^{[ki]} = \xV^{[k2]}, i=3,4 \ldots M$.
To satisfy the above relations, we first generate $\mu_n \times 1$ column vectors $\xw^{[k]}, k=1,2 \ldots N$ such that all the entries of the matrix $\left[ \xw^{[1]}~ \xw^{[2]}~ \ldots~ \xw^{[N]}\right]$ are chosen i.i.d from \emph{some} continuous distribution whose support lies between a finite minimum value and a finite maximum value.
Notice that for a fixed $k$, there are $\Gamma$ interference alignment relations of the form in Lemma \ref{lemma:intalign}. Using column vector $\xw^{[k]}$ (which has non-zero entries with probability $1$), the construction of the Lemma can be used to construct vector spaces $\xV^{[k2]}$ and $\xV^{[k1]}$ satisfying the desired interference alignment relations of (\ref{eqn:subspace_3}). Also, the construction ensures that $\mbox{rank}(\xV^{[k2]}) = n^{\Gamma}$ and $\mbox{rank}(\xV^{[k1]}) = (n+1)^\Gamma$ as required.

Now, we have designed $\xV^{[ji]}$ so that the desired interference alignment equations of (\ref{eqn:subspace_1}) are satisfied. We now need to ensure that at each receiver, all the desired signal streams are linearly independent of each other and independent of the interference, so that they can be decoded using zero-forcing . Notice that at any receiver $k$, interference alignment ensures that all the interference vectors arrive along $\xH^{[k1]} \xV^{[j1]},j=1,2 \ldots k-1,k+1, \ldots N$ and therefore,  the interference space is the space spanned by the $(N-1)(n+1)^\Gamma$ column vectors of $\mathbf{I}_k$ where 
$$ \mathbf{I}_k = \left[ \xH^{[k1]} \xV^{[11]}~~\xH^{[k1]} \xV^{[21]}~~\ldots~~\xH^{[k1]} \xV^{[(k-1)1]} ~~ \xH^{[k1]} \xV^{[k+1]1}~~ \ldots \xH^{[k1]} \xV^{[N1]} \right] $$ 
The desired streams at receiver $k$ arrive along the $(n+1)^\Gamma + (M-1)n^\Gamma$ column vectors of $\mathbf{D}_k$ where 
\begin{eqnarray*}
 \mathbf{D}_k &=& \left[ \xH^{[k1]} \xV^{[k1]}~~\xH^{[k2]} \xV^{[k2]} ~~ \ldots \xH^{[kM]} \xV^{[kM]} \right] \\
 &=& \left[ \xH^{[k1]} \xV^{[k1]}~~\xH^{[k2]} \xV^{[k2]} ~~ \ldots \xH^{[kM]} \xV^{[k2]} \right]
\end{eqnarray*}
So, at receiver $k$, we need to ensure that the matrix 
$$ \mathbf{\Lambda_k} = \left[ \mathbf{D}_k ~~\mathbf{I}_k \right]$$
has a full rank of $\mu_n$ almost surely.  Now, notice that an element in the $m$th row of $\mathbf{\Lambda_k}$ is a monomial term in $H^{[ji]}_m$ and $w^{[j]}_m$ for $i=1,2 \ldots M, j=1,2 \ldots N,$, where $H^{[ji]}_m$ represents the diagonal entry in the $m$th row of $\xH^{[ji]}$ and $w^{[j]}_m$ represents the entry in the $m$th row of the column vector $\xw^{[j]}$. We intend to use Lemma \ref{lemma:nonsingular} to show that the matrix $\mathbf{\Lambda}$ has full rank with probability $1$. To do so, we need to verify that each of the monomial terms in row $m$ are unique. We now make the following observations
\begin{enumerate}
\item In the $m^{th}$ row, the monomial entries from $\xH^{[ki]} \xV^{[ji]}$ contain $w^{[j]}_m$ with exponent 1, but do not contain $w^{[l]}_m, l \neq j$
\item Notice that the equation corresponding to $\xH^{[ki]}, i=2,3 \ldots M$ is missing in the interference alignment relations of (\ref{eqn:subspace_2}) at receiver $k$. The construction of Lemma \ref{lemma:intalign} ensures that monomial entries in the $m^{th}$ row of $\xV^{[k2]}$ do not contain $H^{[ki]}_m, i=2 \ldots M$.
\end{enumerate}
Observation 1) implies that all the monomial entries of the $m$th row of $\mathbf{I}_k$ are unique. Furthermore, it also implies that all the monomial terms in $\mathbf{I}_k$ are different from all the monomials in $\mathbf{D}_k$. Now, observation 2) implies that all the entries in $\mathbf{D}_k$ are unique, since the term $H^{[ki]}_m$ occurs only in the column vectors corresponding to $\xH^{[ki]} \xV^{[k2]}$ . Therefore, all the monomial entries of $\mathbf{\Lambda}$ are unique and from the result of Lemma \ref{lemma:nonsingular}, we can conclude that all the entries in the $m$th row of $\mathbf{\Lambda}_k$ are unique monomial terms and therefore, the matrix has a full rank of $\mu_n$ almost surely.

Thus, the desired signal is linearly independent of all the receivers and therefore, using the techniques of interference alignment and zero-forcing, $\frac{N(n+1)^{\Gamma} + (M-1)Nn^{\Gamma}}{N(n+1)^{\Gamma}+(M-1)n^{\Gamma}}$ degrees of freedom are achievable over the $M \times N$ user $X$ network for any $n \in \mathbb{N}$. Taking supremum over $n$, we conclude that the $M\times N$ user $X$ network has $\frac{MN}{M+N-1}$ degrees of freedom.

\bibliographystyle{ieeetr}
\bibliography{Thesis}
\end{document}